\title{A decidable class of (nominal) omega-regular languages over an infinite alphabet}
\author{Vincenzo Ciancia\inst{1} \and Matteo Sammartino\inst{2}}
\institute{ISTI-CNR, Pisa \and Dipartimento di Informatica, Universit\`a di Pisa, Pisa }
\newcommand{\cref}[1]{\autoref{#1}}
\DeclareMathOperator{\dom}{dom}
\newcommand{\autom}{A}
\newcommand{\hdma}{hDMA}
\newcommand{\hdmas}{\hdma s}
\newcommand{\Trarrow}{\Longrightarrow}
\newcommand{\Tr}[1]{\overset{#1}{\Trarrow}}
\newcommand{\TrP}[2]{\Tr{#1}_{#2}}
\newcommand{\htrind}[3]{\overset{#1}{\underset{#2}{\trarrow_{#3}}}}
\newcommand{\trind}[2]{\overset{#1}{\trarrow_{#2}}}
\newcommand{\pto}{\rightharpoonup}
\newcommand{\ul}[1]{{\underline{#1}}}
\newcommand{\restr}[2]{\ensuremath{\left.#1\right|_{#2}}}
\newcommand{\syncp}{\otimes}
\newcommand{\tstr}{\mathcal{T}}
\newcommand{\id}{\theta}
\newcommand{\forg}{\epsilon}
\newcommand{\ass}{\zeta}
\newcommand{\rrho}{\hat{\rho}}
\newcommand{\trho}{\tilde{\rho}}
\newcommand{\subs}[2]{\nicefrac{#1}{#2}}
\newcommand{\orb}{\mathit{orb}}
\newcommand{\trarrow}{\longrightarrow}
\newcommand{\tr}[1]{\overset{#1}{\trarrow}}
\newcommand{\weight}[1]{|#1|}
\newcommand{\acc}{\mathcal{A}}
\newcommand{\htr}[2]{\overset{#1}{\underset{#2}{\trarrow}}}
\newcommand{\inj}{\rightarrowtail}
\newcommand{\names}{\mathcal{N}}
\newcommand{\Pow}{\mathcal{P}}
\newcommand{\confs}{\mathcal{C}}
\newcommand{\sub}[2]{[\nicefrac{#1}{#2}]}
\newcommand{\run}{r}
\newcommand{\Inf}{\mathit{Inf}}
\newcommand{\Lang}{\mathcal{L}}
\newcommand{\supp}{\mathit{supp}}
\newcommand{\Perm}{\mathbb{P}}
\renewcommand{\Im}{\mathit{Im}}
\begin{document}

\maketitle

\begin{abstract}
 We define a class of languages of infinite words over infinite alphabets, and the corresponding automata. The automata used for recognition are a generalisation of deterministic Muller automata to the setting of nominal sets. Remarkably, the obtained languages are determined by their ultimately periodic fragments, as in the classical case. Closure under complement, union and intersection, and decidability of emptyness and equivalence are preserved by the generalisation. This is shown by using finite representations of the (otherwise infinite-state) defined class of automata.
\end{abstract}

\section{Introduction}\label{sec:introduction}



Languages of infinite words are of paramount importance in logics and computer science. Their usage scenarios range from decidability proofs in logics, 
 to applications of relevant practical impact, such as model checking and learning of logical properties. Just as in the case of finite words, these languages are typically defined on finite alphabets. However, there are cases in which the alphabet is infinite, e.g.\
\emph{data words} (see \cite{Seg06} for  a survey), or \emph{nominal calculi} \cite{MPW92}. Languages of \emph{finite} words over infinite alphabets have thoroughly been studied in the literature (see e.g., \cite{KF94,Tze11}). 
It is nowadays clear that register automata, and languages of infinite alphabets, are often expressible as automata over \emph{nominal sets} \cite{GP02}, which are in turn equivalent  to history-dependent automata \cite{Pistore99,FioreS06,GadducciMM06}. 

Several recent papers (see e.g., \cite{Tze11,KST12,GC11}) deal with nominal automata. The paper \cite{BojanczykKL11} discusses languages that are expressible using generalised notions of nominal sets. The same point of view led to the developments described in \cite{BBKL12,LP13,BKLT13}. All these results may be identified as parts of an emergent
\emph{nominal computation theory}. Nominal sets introduce the key notion of \emph{finite support}, that can be regarded as a finite memory property. From the automata-theoretic perspective, languages of finite words over infinite (nominal) alphabets are treated in a satisfactory way by resorting to an \emph{orbit-finite} set of states, equipped with an \emph{equivariant} transition relation, and equivariant acceptance condition. Finite words have finite support, thus the set of all words forms a nominal set. 

The case of infinite words over nominal alphabets is more problematic, as an infinite word over an infinite alphabet is generally not finitely supported. Consider a machine that reads any symbol from an infinite, countable alphabet, and never stores it. Clearly, such a machine has finite (empty) memory. The set of its traces is simply described as the set of all infinite words over the alphabet. However, in the language we have various species of words. Some of them are finitely supported, e.g.\ words that consist of the infinite repetition of a finite word. Some others are not finitely supported, such as the word enumerating all the symbols of the alphabet. Such words lay inherently out of the realm of nominal sets. However, the existence of these words does not give to the language infinite memory. More precisely,  words without finite support can not be ``singled out'' by a finite memory machine; if a machine accepts one of those, then it will accept infinitely many others, including finitely supported words.  

The aim of this work is to translate the intuitions in the previous paragraphs into precise mathematical terms, in order to define a class of languages of infinite words over infinite alphabets, enjoying finite-memory properties. We extend automata over nominal sets to handle infinite words, by imposing a (Muller-style) acceptance condition 
over the \emph{orbits} (not the states!) of automata. By doing so, it turns out that our languages not only are finite-memory, but they retain computational properties, such as closure under boolean operations and decidability of emptiness (thus, containment and equivalence), which we prove by providing finite representations, and effective constructions. Moreover, we prove that the obtained languages are determined by their \emph{ultimately periodic} fragments, just as in the classical result by B\"uchi \cite{Buchi62}. 
This clarifies the intuition about accepting ``infinitely many other words'' for machines accepting a word which is not finitely supported. 
The proof itself is non-trivial, as one has to deal with freshness and finite memory, and it crucially depends on the usage of finite representations.
Being determined by ultimately periodic fragments is
a relevant property for classical automata, whose consequences have probably not yet been explored in full.
For example, such property has been used in learning of languages of infinite words \cite{FCCTW08}, or to find canonical representatives up-to language equivalence, in a coalgebraic flavour \cite{CV12}. We expect that further exploitation of ultimately-periodic fragments may also be beneficial for 
our automata.%

\section{Background}\label{sec:background}
\paragraph{Notation.} For $X$, $Y$ sets, we let $f \colon X \to Y$ be a total function from $X$ to $Y$, $f \colon X \inj Y$ be a total injective function and $f \colon X \pto Y$ a partial function. We write $\dom(f)$ for the subset of $X$ on which $f$ is defined, and $\Im(f)$ for the image of $f$. For $f$ injective, the expression $f^{-1} \colon Y \pto X$ denotes the the partial inverse function $\{(y,x) \mid f(x) = y \}$. We let $\restr{f}{X'}$, with $X' \subseteq X$, be the domain restriction of $f$ to $X'$. (Partial) function compositions is written $f \circ g$: it maps $x$ to $f(g(x))$ only if $x \in \dom(g)$; $f^n$ is the $n$-fold composition of $f$ with itself. We denote the natural numbers with $\omega$. For $s$ a sequence, we let $s_i$ or $s(i)$ denote its $i^{\mathit{th}}$ element, for $i \in \omega$. Given a binary relation $R$, we denote by $R^*$ its symmetric, transitive and reflexive closure. We say that $x$ and $y$ are \emph{$R$-related} whenever $(x,y) \in R$. We use $\circ$ also for relational composition, and we write $R \circ f$ (or viceversa), with $R$ a relation and $f$ a function, for the composition of $R$ with the graph of $f$.

We shall now briefly introduce nominal sets; we refer the reader to \cite{GP02} for more details on the subject. We assume a countable set of \emph{names} $\names$, and we write $\Perm$ for the group of finite-kernel permutations of $\names$, namely those bijections $\pi \colon \names \to \names$ such that the set $\{ a \mid \pi(a) \neq a \}$ is finite.
\begin{definition}
A \emph{nominal set} is a set $X$ along with an action for $\Perm$, that is a function $\cdot \colon \Perm \times X \to X$ such that, for all $x \in X$ and $\pi,\pi' \in \Perm$, $x \cdot id_\names = x$ and $(\pi \circ \pi') \cdot x = \pi \cdot (\pi' \cdot x)$. Also, it is required that each $x \in X$ has \emph{finite support}, meaning that there exists a finite $S \subseteq \names$ such that, for all $\pi \in \Perm$, $\restr{\pi}{S} = id_S$ implies $\pi \cdot x = x$. We denote the least such $S$ with $\supp(x)$. An \emph{equivariant function} from nominal set $X$ to nominal set $Y$ is a function $f : X \to Y$ such that, for all $\pi$ and $x$, $f(\pi \cdot x) = \pi \cdot f(x)$.
\end{definition}
\begin{definition}
Given $x \in X$, the \emph{orbit} of $x$, denoted by $\orb(x) $, is the set $\{ \pi \cdot x \mid \pi \in \Perm\} \subseteq X$. For $S \subseteq X$, we write $\orb(S)$ for $\{ \orb(x) \mid x \in S\}$. We call $X$ \emph{orbit-finite} when $\orb(X)$ is finite.
\end{definition}

\noindent Note that $\orb(X)$ is a partition of $X$. The prototypical nominal set is $\names$ with $\pi \cdot a = \pi(a)$ for each $a \in \names$; we have $\supp(a) = \{a\}$, and $\orb(a) = \names$.

\section{Nominal regular $\omega$-languages}\label{sec:languages}

In the following, we extend \emph{Muller automata} to the case of nominal alphabets. Traditionally, automata can be deterministic or non-deterministic. In the case of finite words, non-deterministic nominal automata are not closed under complementation, whereas the deterministic ones are; similar considerations apply to the infinite words case. Thus, we adopt the deterministic setting in order to retain complementation.

\begin{definition}\label{def:ndma}
 A \emph{nominal deterministic Muller automaton} (nDMA) is a tuple $(Q,\tr{},q_0,\acc)$ where:
 
  \begin{itemize}
  \item $Q$ is an orbit-finite nominal set of \emph{states}, with $q_0 \in Q$ the \emph{initial state};
  
  \item $\acc \subseteq \Pow(\orb(Q))$ is a set of sets of orbits, intended to be used as an acceptance condition in the style of Muller automata.
  
  \item $\htr{}{}$ is the \emph{transition relation}, made up of triples $q_1 \tr{a} q_2$, having \emph{source} $q_1$, \emph{target} $q_2$, \emph{label} $a \in \names$;
  
  \item the transition relation is \emph{deterministic}, that is, for each $q \in Q$ and $a \in \names$ there is exactly one transition with source $q$ and label $a$;
  
  \item the transition relation is \emph{equivariant}, that is, invariant under permutation: there is a transition $q_1 \tr a q_2$ if and only if, for all $\pi$, also the transition $\pi \cdot q_1 \tr{\pi(a)} \pi \cdot q_2$ is present.
 \end{itemize}
\end{definition}
In nominal sets terminology, the transition relation is an \emph{equivariant function} of type $Q \times \names \to Q$.  Notice that nDMA are infinite state, infinitely branching machines, even if orbit finite. For effective constructions we employ equivalent finite structures (see Section \ref{sec:hd-automata}). Definition \ref{def:ndma} induces a simple definition of acceptance, very close to the classical one. In the following, fix a nDMA $A=(Q,\tr{},q_0,\acc)$.

\begin{definition}
\label{def:inf-word}
 An infinite \emph{word} $\alpha \in \names^\omega$ is an infinite sequence of symbols in $\names$. Words have \emph{point-wise} permutation action, namely $(\pi \cdot \alpha)_i = \pi(\alpha_i)$, making a word finitely supported if and only it contains finitely many different symbols. 
\end{definition}

\begin{definition}\label{def:nominal-run}
 Given a word $\alpha \in \names^\omega$, a \emph{run} of $\alpha$ from $q \in Q$ is a sequence of states $\run_i \in Q^\omega$, such that $\run_0 = q$, and for all $i$ we have $\run_i \tr{\alpha_i} \run_{i+1}$. 
 By determinism (see Definition \ref{def:ndma}), for each infinite word $\alpha$, and each state $q$, there is exactly one run of $\alpha$ from $q$, that we call $\run^{\alpha,q}$, or simply $\run^{\alpha}$ when $q=q_0$.
\end{definition}

\begin{definition}\label{def:inf-set}
 For $\run \in Q^\omega$, let $\Inf(\run)$ be the set of \emph{orbits} that $\run$ traverses infinitely often, i.e., $\orb(q) \in \Inf(\run)$ iff., for all $i$, there is $j > i$ s.t. $\run_j \in \orb(q)$.
\end{definition}

\begin{definition}
 A word $\alpha$ is \emph{accepted} by state $q$ whenever $\Inf(\run^{\alpha,q}) \in \acc$. We let $\Lang_{A,q}$ be the set of all accepted words by $q$ in $A$; we omit $A$ when clear from the context, and $q$ when it is $q_0$, thus $\Lang_A$ is the language of the automaton $A$. We say that $\Lang \subseteq \names^\omega$ is a \emph{nominal $\omega$-regular language} if it is accepted by a nDMA.
\end{definition}

\begin{remark}\label{rem:simple-alphabet} We use $\names$ as alphabet. One can chose any orbit-finite nominal set; the definitions of automata and acceptance are unchanged, and finite representations are similar.
%
Using $\names$ simplifies the presentation, especially in Section \ref{sec:hd-automata}.
\end{remark}

\begin{example}\label{exa:session}
 Consider the nDMA in \cref{fig:example-session}. We have $Q = \{q_0\} \cup \{q_a \mid a \in \names\}$. For all $\pi$, we let $\pi \cdot q_0 = q_0$, $\pi \cdot q_a = q_{\pi(a)}$. We have $\supp(q_0) = \emptyset$, and $\supp(q_a) = \{ a \}$. For all $a$, let $q_0 \tr{a} q_a$, $q_a \tr{a} q_0$, and for $b \neq a$, $q_a \tr b q_a$. Each of the infinite ``legs'' of the automaton rooted in $q_0$ remembers a different name, and returns to $q_0$ when the same name is encountered again. There are two orbits, namely $\orb_0 = \{ q_0 \}$ and $\orb_1 = \{ q_a \mid a \in \names \}$. We let $\acc = \{ \{ \orb_0, \orb_1 \} \}$. For acceptance, a word needs to cross both orbits infinitely often. Thus, $\Lang_{q_0}=\{aua \mid a \in \names, u \in (\names\setminus\{a\})^* \}^\omega$.
This is an idealised version of a service, where each in a number of potentially infinite users (represented by names) may access the service, reference other users, and later leave. Infinitely often, an arbitrary symbol occurs, representing an ``access''; the next occurrence of the same symbol denotes a ``leave''. One could use an  alphabet with two infinite orbits to distinguish the two kinds of action (see Remark \ref{rem:simple-alphabet}), or reserve two distinguished names of $\names$ to be used as ``brackets'' 	
 before the different occurrences of other names, adding more states.
\end{example}

\begin{figure}
\begin{center}
 \begin{tikzpicture}[->,auto,node distance=2.2cm]  
  \node[state] (q0)               {$q_0$};
  \node[state] (qa) [above left of = q0] {$q_a$};
  \node[state] (qb) [below left of = q0] {$q_b$};
  \node[state] (qc) [below right of = q0] {$q_c$};
  \node (qany) [above right of = q0] {$\ldots$};

  \path (q0) edge [bend left]  node[inner sep=1pt] {$a$} (qa);
  \path (q0) edge [bend left]  node[inner sep=1pt] {$b$} (qb);
  \path (q0) edge [bend left]  node[inner sep=1pt] {$c$} (qc);
  \path (q0) edge [bend left]  node {$\ldots$} (qany);
  
  \path (qa) edge [bend left]  node[inner sep=1pt] {$a$} (q0)
             edge [loop left] node {$b,c,d,\ldots$} (qa);
  
  \path (qb) edge [bend left]  node[inner sep=1pt] {$b$} (q0)
             edge [loop left] node {$a,c,d,\ldots$} (qb);
  
  \path (qc) edge [bend left]  node[inner sep=1pt] {$c$} (q0)
             edge [loop right] node {$a,b,d,\ldots$} (qc);
  
  \path (qany) edge [bend left]  node {$\ldots$} (q0);
             edge [loop right] node {} (qany);
             
\end{tikzpicture}
\end{center}
\caption{\label{fig:example-session} The nDMA of \cref{exa:session}, with acceptance condition $\acc = \{ \{ q_0 \}, \{ q_a \mid a \in \names \}\}$.}
\end{figure}

\noindent Accepted words may fail to be finitely supported. However, languages are, in line with the intuition of studying machines with finite memory, that never halt.

\begin{theorem}\label{thm:languages-finitely-supported}
 For $\Lang$ a language, and $\pi\in \Perm$, let $\pi\cdot \Lang = \{\pi \cdot \alpha \mid \alpha \in \Lang\}$. For each state $q$ of an nDMA, $\Lang_{q}$ is finitely supported.
\end{theorem}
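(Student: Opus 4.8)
The plan is to show that $\supp(q)$ supports $\Lang_q$. By definition of finite support, it suffices to prove that every $\pi \in \Perm$ with $\restr{\pi}{\supp(q)} = id_{\supp(q)}$ satisfies $\pi \cdot \Lang_q = \Lang_q$. I would obtain this from the stronger, purely structural identity
$\pi \cdot \Lang_q = \Lang_{\pi \cdot q}$, valid for \emph{every} $\pi \in \Perm$: indeed, if $\pi$ fixes $\supp(q)$ pointwise then $\pi \cdot q = q$ by definition of support, whence $\pi \cdot \Lang_q = \Lang_{\pi \cdot q} = \Lang_q$, so $\Lang_q$ is finitely supported with $\supp(\Lang_q) \subseteq \supp(q)$. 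Since $Q$ is orbit-finite, $\supp(q)$ is finite, giving the result.

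To prove the identity, I would first establish that runs are equivariant: for each word $\alpha$, state $q$, and $\pi \in \Perm$, the pointwise-permuted sequence $\pi \cdot \run^{\alpha,q}$, with $(\pi \cdot \run^{\alpha,q})_i = \pi \cdot \run^{\alpha,q}_i$, is exactly $\run^{\pi \cdot \alpha,\, \pi \cdot q}$. This holds because $\pi \cdot \run^{\alpha,q}$ starts in $\pi \cdot q$ and, applying equivariance of the transition relation to each step $\run^{\alpha,q}_i \tr{\alpha_i} \run^{\alpha,q}_{i+1}$, satisfies $\pi \cdot \run^{\alpha,q}_i \tr{\pi(\alpha_i)} \pi \cdot \run^{\alpha,q}_{i+1}$; hence it is a run of $\pi \cdot \alpha$ from $\pi \cdot q$, and determinism (Definition \ref{def:nominal-run}) forces it to coincide with $\run^{\pi \cdot \alpha,\, \pi \cdot q}$.

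The crucial observation, and the reason the acceptance condition is phrased over orbits rather than states (Definition \ref{def:inf-set}), is that $\Inf$ is invariant under the permutation action: since $\pi \cdot \run_j$ lies in the same orbit as $\run_j$, a run and its permuted version visit exactly the same orbits infinitely often, so $\Inf(\pi \cdot \run) = \Inf(\run)$. Combining this with the previous step yields $\Inf(\run^{\pi \cdot \alpha,\, \pi \cdot q}) = \Inf(\run^{\alpha,q})$, and since membership in $\acc$ depends only on this set of orbits, $\alpha$ is accepted by $q$ iff $\pi \cdot \alpha$ is accepted by $\pi \cdot q$. This is precisely $\pi \cdot \Lang_q = \Lang_{\pi \cdot q}$.

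I do not expect a hard technical obstacle: the difficulty is entirely conceptual and is already dissolved by the design choice in Definition \ref{def:inf-set}. Had acceptance been stated in terms of the individual states visited infinitely often, $\Inf$ would fail to be permutation-invariant and $\pi \cdot \Lang_q = \Lang_{\pi \cdot q}$ would break; it is exactly the passage to \emph{orbits} that makes $\Inf$ stable under $\Perm$ and produces a finitely supported language even though individual accepted words need not be finitely supported. The only point needing mild care is to keep track that $\Inf$ records orbits, so that the equality $\orb(\run_j) = \orb(\pi \cdot \run_j)$ is what drives the invariance.
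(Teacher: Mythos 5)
Your proof is correct and takes essentially the same approach as the paper's: both reduce the statement to the equivariance identity $\pi \cdot \Lang_q = \Lang_{\pi \cdot q}$, established exactly as you do via equivariance of the transition relation plus determinism (giving $r'_i = \pi \cdot r_i$) and the orbit-invariance of $\Inf$. The only, inessential, difference is that the paper concludes by invoking the general nominal-sets fact that equivariant functions preserve finite support, whereas you unfold that fact directly; note also the trivial slip that $\supp(q)$ is finite by the definition of nominal set, not because $Q$ is orbit-finite.
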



\section{Finite automata}\label{sec:hd-automata}

In this section, we introduce finite representations of nDMAs. These are similar to classical finite-state automata, but each state is equipped with local registers. There is a notion of assignment to registers, and it is possible to accept, and eventually store, \emph{fresh} symbols. Technically, these structures extend \emph{history-dependent automata} (see \cite{Pistore99}), introducing acceptance of infinite words.

\begin{definition}\label{def:hdma}
 An \emph{history-dependent deterministic Muller automaton} (\hdma) is a tuple $(Q,\weight -,q_0,\rho_0,\htr{}{},\acc)$
 where:
 \begin{itemize}
  \item $Q$ is a finite set of \emph{states};
  \item for $q \in Q$, $\weight{q}$ is a finite set of \emph{local names} (or \emph{registers}) of state $q$;
  \item $q_0 \in Q$ is the \emph{initial state};
  \item $\rho_0 : \weight{q_0} \to \names$ is the \emph{initial assignment};
  \item $\acc \subseteq \Pow(Q)$ is the \emph{accepting condition}, in the style of \emph{Muller automata};
  \item $\htr{}{}$ is the \emph{transition relation}, made up of quadruples $q_1 \htr{l}{\sigma} q_2$, having \emph{source} $q_1$, \emph{target} $q_2$, label $l \in \weight{q_1} \uplus \{\star\}$, and \emph{history} $\sigma : \weight{q_2} \inj \weight{q_1} \uplus \{l\}$;
  \item the transition relation is \emph{deterministic} in the following sense: for each $q_1 \in Q$,   there is exactly one transition with source $q_1$ and label $\star$, and exactly one transition with source $q_1$ and label $x$ for each $x \in \weight{q_1}$.
 \end{itemize}
\end{definition}

\begin{remark}
To keep the notation lightweight, we do not use a \emph{symmetry} attached to states of an \hdma. It is well known (see \cite{MontanariP05}) that symmetries are needed for existence of canonical representatives; we consider this aspect out of the scope of this work. Note that (classical) Muller automata do not have canonical representatives up-to language equivalence. To obtain those, one can use two-sorted structures as in \cite{CV12}. Even though this idea could be applied to hDMAs, this is not straightforward, and requires further investigation.
\label{rem:no-symmetry}
\end{remark}
In the following we fix a \hdma{} $A = (Q,\weight -,q_0,\rho_0,\htr{}{},\acc)$. We overload notation (e.g., for the inf-set or the unique run of a word) from \cref{sec:languages}, as it will be always clear from the context whether we are referring to an nDMA or to an \hdma. Acceptance of $\alpha \in \names^\omega$ is defined using the \emph{configuration graph} of $A$.

\begin{definition}\label{def:configuration-graph}
 The set $\confs(A)$ of \emph{configurations} of $A$ consists of the pairs $(q,\rho)$ such that $q \in Q$ and $\rho : \weight q \inj \names$ is an injective \emph{assignment} of names to registers.
\end{definition}

\begin{definition}
\label{def:configuration-graph}
The \emph{configuration graph} of $A$ is a is a graph with edges of the form  
$(q_1,\rho_1) \tr a (q_2,\rho_2)$ where the source and destination are configurations, and $a \in \names$. There is one such edge if and only if there is a transition $q_1 \htr l \sigma q_2$ in $A$ and either of the following happens: 
 \begin{itemize} 
  \item $l \in \weight{q_1}$, $\rho_1(l) = a$, and $\rho_2 = \rho_1 \circ \sigma$;
  \item $l = \star$, $a \notin \Im(\rho_1)$, $\rho_2 = (\rho_1 \circ \sigma)\sub{a}{\sigma^{-1}(\star)}$.
 \end{itemize}
\end{definition}

The definition deserves some explanation. Fix a configuration $(q_1,\rho_1)$. Say that name $a\in \names$ is \emph{assigned to} the register $x \in \weight{q_1}$ if $\rho_1(x) = a$. When $a$ is not assigned to any register, it is \emph{fresh} for a given configuration. Then the transition $q_1 \htr l \sigma q_2$, under the assignment $\rho_1$, consumes a symbol as follows: either $l \in \weight{q_1}$ and $a$ is the name assigned to register $l$, or $l$ is $\star$ and $a$ is fresh. The destination assignment $\rho_2$ is defined using $\sigma$ as a binding between local registers of $q_2$ and local registers of $q_1$, therefore composing $\sigma$ with $\rho_1$ and eventually adding a freshly received name, whenever $\star$ is in the image of $\sigma$. For readability, we assume that the functional update $\sub{a}{\sigma^{-1}(\star)}$ is void when $\star \notin \Im(\sigma)$. The following lemma clarifies the notion of determinism that we use.

\begin{lemma}
\label{lem:deterministic-configuration-graph}
 For each configuration $(q_1,\rho_1)$ and symbol $a \in \names$, there is exactly one configuration $(q_2,\rho_2)$ such that $(q_1,\rho_1) \tr a (q_2,\rho_2)$.
\end{lemma}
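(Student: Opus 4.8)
The plan is to proceed by a case analysis on whether the consumed symbol $a$ lies in $\Im(\rho_1)$ or not, since the two bullets in the definition of the configuration graph are guarded precisely by these complementary conditions. First I would observe that exactly one of ``$a \in \Im(\rho_1)$'' and ``$a \notin \Im(\rho_1)$'' holds, so it suffices to show that in each case there is a unique transition $q_1 \htr{l}{\sigma} q_2$ that can fire, and that the induced assignment $\rho_2$ is a genuine configuration, i.e.\ that $\rho_2 \colon \weight{q_2} \inj \names$ is injective. Existence and uniqueness of the edge then reduce to the determinism of the \hdma{} transition relation (exactly one transition per source and per label) together with the injectivity of $\rho_1$.

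In the non-fresh case, $a \in \Im(\rho_1)$, and since $\rho_1$ is injective there is a unique register $x = \rho_1^{-1}(a) \in \weight{q_1}$ with $\rho_1(x) = a$. The second bullet cannot apply (it requires $a \notin \Im(\rho_1)$), so only the first bullet is available, and it forces the label to be exactly $x$. By the determinism clause of \cref{def:hdma} there is exactly one transition $q_1 \htr{x}{\sigma} q_2$, which pins down $q_2$ and $\sigma$, hence $\rho_2 = \rho_1 \circ \sigma$ is uniquely determined. Here $\star \notin \Im(\sigma)$, so $\sigma$ maps $\weight{q_2}$ into $\weight{q_1}$, and $\rho_2$ is injective as a composite of two injective maps. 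In the fresh case, $a \notin \Im(\rho_1)$, the first bullet is excluded and only the $\star$-labelled transition can fire; again determinism gives a unique such transition, fixing $q_2$, $\sigma$ and therefore $\rho_2 = (\rho_1 \circ \sigma)\sub{a}{\sigma^{-1}(\star)}$. Injectivity of $\sigma$ makes $\sigma^{-1}(\star)$ well defined when $\star \in \Im(\sigma)$, and the stated convention handles the void update when $\star \notin \Im(\sigma)$.

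I expect the only real point requiring care, and thus the main obstacle, to be checking that $\rho_2$ is injective in the fresh case. Here $\rho_1 \circ \sigma$ is injective with image contained in $\Im(\rho_1)$, while the single register $\sigma^{-1}(\star)$ is overridden to the value $a$; injectivity can only fail if $a$ collided with one of the other values, but this is ruled out precisely by the freshness hypothesis $a \notin \Im(\rho_1)$. This is the step where the guard on the $\star$-transition does genuine work, rather than being a bookkeeping convenience. Once injectivity is confirmed in both cases, uniqueness across cases is immediate since the two cases are mutually exclusive and each admits exactly one firing transition, completing the argument.
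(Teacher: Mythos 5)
Your proposal is correct and follows essentially the same route as the paper's proof: a case split on $a \in \Im(\rho_1)$ versus $a \notin \Im(\rho_1)$, with existence and uniqueness in each case reduced to injectivity of $\rho_1$ and the determinism clause of \cref{def:hdma}. You additionally verify that $\rho_2$ is injective (so that $(q_2,\rho_2)$ is a genuine configuration, with the freshness guard $a \notin \Im(\rho_1)$ doing the work in the $\star$-case), a point the paper's proof leaves implicit; this is a welcome refinement rather than a deviation.
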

We use the notation $(q_1,\rho_1) \Tr{v} (q_2,\rho_2)$ to denote a path that spells $v$ in the the configuration graph. Furthermore, we define runs of infinite words.

\begin{definition}
 A \emph{run} $\run$ of an infinite word $\alpha \in \names^\omega$ from configuration $(q,\rho)$ is a sequence $(q_i,\rho_i)$ of configurations, indexed by $\omega$, such that $(q_0,\rho_0)=(q,\rho)$ and for all $i$, in the configuration graph, we have $(q_i,\rho_i) \tr{\alpha_i} (q_{i+1},\rho_{i+1})$. 
\end{definition}

\noindent The following is a simple corollary of  \cref{lem:deterministic-configuration-graph}.

\begin{proposition}
\label{prop:unique-path}
Given $(q_1,\rho_1) \in \confs(A)$ and $v \in \names^\omega$, there exists a unique path $(q_1,\rho_1) \Tr{v} (q_2,\rho_2)$ in the configuration graph of $A$. Similarly, for each word $\alpha$ and configuration $(q,\rho)$, there is a unique run $\run^{\alpha,q,\rho}$ from $(q,\rho)$. We omit $q$ and $\rho$ from the notation, when dealing with the \emph{initial configuration} $(q_0,\rho_0)$.
\end{proposition}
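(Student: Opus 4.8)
The plan is to derive both claims from the single-step determinism established in \cref{lem:deterministic-configuration-graph}, which guarantees that for every configuration $(q_1,\rho_1)$ and every symbol $a \in \names$ there is \emph{exactly one} outgoing $a$-labelled edge in the configuration graph. Put differently, the configuration graph is the graph of a total, deterministic action of $\names$ on $\confs(A)$, and everything below is a routine consequence of that fact. Accordingly I expect this to be a short corollary rather than an argument with a genuine obstacle.

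First I would treat the finite paths. For a finite $v$ I argue by induction on the length of $v$. The base case is the empty word, for which the empty path at $(q_1,\rho_1)$ is trivially the unique path. For the inductive step, write $v = u\,a$ with $a \in \names$; by the induction hypothesis there is a unique path $(q_1,\rho_1) \Tr{u} (q',\rho')$, and by \cref{lem:deterministic-configuration-graph} there is a unique $(q_2,\rho_2)$ with $(q',\rho') \tr{a} (q_2,\rho_2)$. Concatenating the two yields a path spelling $v$; conversely any path spelling $v$ must factor through its length-$|u|$ prefix and a final $a$-edge, so both factors—and hence the whole path—are forced, giving uniqueness.

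For an infinite word $\alpha$ I would build the run $\run^{\alpha,q,\rho}$ position by position. Setting the $0$-th configuration to $(q,\rho)$, I apply \cref{lem:deterministic-configuration-graph} at each index $i$: given $(q_i,\rho_i)$ and the symbol $\alpha_i$, there is exactly one successor, which I take as $(q_{i+1},\rho_{i+1})$. This produces an $\omega$-indexed sequence of configurations satisfying the defining condition of a run, establishing existence. For uniqueness I observe that any run must agree with $\run^{\alpha,q,\rho}$ at index $0$ by definition, and then show by induction on $i$ that it agrees at every index: agreement at $i$ forces agreement at $i+1$ by the uniqueness half of the lemma. Two $\omega$-sequences that coincide at every index are equal, so the run is unique.

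The proof presents essentially no difficulty; the only point worth flagging is that, unlike the finite case, the infinite run is not obtained by a single finite induction but by a step-by-step determination—existence is a greedy, coinductive-flavoured construction, and uniqueness is an induction over positions—both ultimately resting on the single-step determinism of \cref{lem:deterministic-configuration-graph}.
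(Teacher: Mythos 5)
Your proposal is correct and matches the paper's approach exactly: the paper states this proposition as ``a simple corollary of \cref{lem:deterministic-configuration-graph}'' and gives no further proof, and your finite-induction plus position-by-position construction is precisely the routine argument being elided (note you also correctly read the finite-path case as concerning $v \in \names^*$, despite the statement's typo ``$v \in \names^\omega$''). Nothing to add.
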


\noindent Finally, we define acceptance of \hdmas. 

\begin{definition}\label{def:acceptance-of-hdmas}
 Consider the unique run $\run$ of an infinite word $\alpha$ from configuration $(q,\rho)$. 
 Let $\Inf(\run)$ denote the set  of states that appear infinitely often in the first component of $\run$. By finiteness of $Q$, $\Inf(\run)$ is not empty. The automaton $A$ accepts $\alpha$ whenever $\Inf(r) \in \acc$. In this case, we speak of the \emph{language} $\Lang_A$ of words accepted by the automaton.
\end{definition}
%
%
%
%
%
%
%
%
%
%
%
As an example, the language $\names^\omega$ of all infinite words over $\names$ is recognised by the \hdma\ 
in \cref{fig:hd-names-omega}; the initial assignment $\rho_0$ is necessarily empty, and so is the history $\sigma$ along the transition.
Differently from nDMAs, \hdmas\ have finite states. 
Finite representations are useful for effective operations on languages, as we shall see later. The similarity between configuration graphs of \hdmas, and nDMAs, is deep, as stated in the following propositions. These are similar to the categorical equivalence results in \cite{GadducciMM06,FioreS06}; however, notice that representing infinite branching systems using  ``allocating transitions'' requires further machinery, similar to what is studied in \cite{CianciaM10}. See also \cref{rem:no-symmetry} about symmetry.

\begin{figure}[t]
\centering
\begin{tikzpicture}[->,>=stealth',shorten >=1pt,auto,node distance=2.8cm,semithick,initial text={}]
	
  \node[state,initial] (q0) {$q_0$}; 
  \node[right=10ex of q0] {$\acc = \{ \{q_0 \}\}$ };

  \path (q0) edge [loop right]  node[inner sep=1pt] (star) {$\star$} (q0);
\end{tikzpicture}
\caption{The \hdma{} accepting $\names^\omega$.}
\label{fig:hd-names-omega}
\end{figure}

\begin{proposition}\label{pro:nset-to-nom}
 The configuration graph of $A$, equipped with the permutation action $\pi \cdot (q,\rho) = (q,\pi \circ \rho)$ forms the transition structure of an nDMA. The orbits of the obtained nDMA are in one to one correspondence with states in $Q$; thus the acceptance condition on states can be used as an acceptance condition on the orbits of the configuration graph. When the configuration $(q_0,\rho_0)$ is chosen as initial state, the obtained nDMA accepts the same language as $A$.
\end{proposition}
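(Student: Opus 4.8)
The plan is to verify the four assertions in turn: that the prescribed action makes $\confs(A)$ a nominal set, that its orbits are in bijection with $Q$, that the configuration-graph edges constitute an equivariant deterministic transition relation, and finally that the languages coincide. First I would check that $\pi \cdot (q,\rho) = (q, \pi \circ \rho)$ is a group action; the identity and composition laws follow immediately from $id_\names \circ \rho = \rho$ and associativity of composition. Since each $\rho : \weight{q} \inj \names$ has finite domain, $\Im(\rho)$ is finite, and any $\pi$ fixing $\Im(\rho)$ pointwise satisfies $\pi \circ \rho = \rho$; hence every configuration is finitely supported, with $\supp(q,\rho) = \Im(\rho)$, so $\confs(A)$ is a nominal set.

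For the orbit structure, note the action never alters the first component, so $\orb(q,\rho) \subseteq \{(q,\rho') \mid \rho' : \weight{q} \inj \names\}$. Conversely, given injective assignments $\rho, \rho'$ on $\weight{q}$, the finite partial bijection $\rho(x) \mapsto \rho'(x)$ extends to some $\pi \in \Perm$ with $\pi \circ \rho = \rho'$, so the inclusion is an equality. Thus $(q,\rho) \mapsto q$ induces a bijection $\orb(\confs(A)) \cong Q$; in particular the configuration graph is orbit-finite, and the \hdma{} condition $\acc \subseteq \Pow(Q)$ transports along this bijection to a set $\acc' \subseteq \Pow(\orb(\confs(A)))$, which serves as the nDMA acceptance condition.

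Next I would establish that the edge relation is a deterministic equivariant transition relation. Determinism is exactly \cref{lem:deterministic-configuration-graph}. For equivariance it suffices, by applying $\pi^{-1}$ for the converse, to show that each edge $(q_1,\rho_1) \tr{a} (q_2,\rho_2)$ witnessed by $q_1 \htr{l}{\sigma} q_2$ yields an edge $(q_1, \pi \circ \rho_1) \tr{\pi(a)} (q_2, \pi \circ \rho_2)$ witnessed by the \emph{same} transition. The bound case $l \in \weight{q_1}$ is routine, using $(\pi \circ \rho_1)(l) = \pi(a)$ and $\pi \circ (\rho_1 \circ \sigma) = (\pi \circ \rho_1) \circ \sigma$. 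The case $l = \star$ is the delicate one, and the main obstacle: one must check that the freshness side-condition survives, i.e. $\pi(a) \notin \Im(\pi \circ \rho_1)$, which holds since $\pi$ is a bijection and $a \notin \Im(\rho_1)$, and crucially that $\pi$ commutes with the register update, i.e. $\pi \circ \big((\rho_1 \circ \sigma)\sub{a}{\sigma^{-1}(\star)}\big) = \big((\pi \circ \rho_1) \circ \sigma\big)\sub{\pi(a)}{\sigma^{-1}(\star)}$. This identity is settled by a pointwise argument distinguishing the register $\sigma^{-1}(\star)$ (mapped to $a$, hence to $\pi(a)$) from the remaining registers, on which the update is vacuous and $\pi$ distributes over composition.

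Finally, for language equality I would fix $(q_0,\rho_0)$ as initial state. By \cref{prop:unique-path}, the unique nDMA-run of a word $\alpha$ and the unique \hdma-run of $\alpha$ are literally the same sequence of configurations. Under $\orb(\confs(A)) \cong Q$, the set of orbits the run visits infinitely often corresponds exactly to the set of first-component states it visits infinitely often, so the two notions of $\Inf$ agree. Hence $\alpha$ is accepted by the nDMA (with condition $\acc'$) iff $\Inf(\run^\alpha) \in \acc'$ iff the corresponding state-set lies in $\acc$ iff $\alpha$ is accepted by $A$, giving $\Lang = \Lang_A$.
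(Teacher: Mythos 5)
Your proof is correct and follows essentially the same route as the paper's: determinism via \cref{lem:deterministic-configuration-graph}, equivariance by case analysis on the label with the $\star$ case handled through preservation of the freshness condition and commutation of $\pi$ with the register update, the orbit characterisation $\orb(q,\rho) = \{(q,\rho') \mid \rho' : \weight{q} \inj \names\}$, and language equality from the literal coincidence of runs. You merely spell out steps the paper leaves implicit (the group-action laws, finite support, the extension of the finite partial bijection $\rho(x) \mapsto \rho'(x)$ to a permutation, and the pointwise check of the update identity), which is all sound.
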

%

\begin{proposition}\label{prop:ndma-to-hdma}
For each nDMA $(Q,\tr{},q_0,\acc)$, there is an \hdma{} the same language.
For $q \in Q$ , let $o_q$ be a chosen canonical representative of $\orb(q)$, $o_{S \subseteq X} = \{o_q \mid q \in S\}$ and $\rho_q$ be a chosen permutation such that $\rho_q \cdot o_q = q$.	Construct the \hdma\  $(o_Q,\weight-,\htr{}{},o_{q_0},\restr{\rho_{q_0}}{\weight{o_{q_0}}},\{ \{ o_q \mid q \in A\} \mid A \in \acc \})$, 
with $\weight{o_q} = \supp(o_q)$. For each nDMA transition $o_{q_1} \tr a q_2$, if $a \in \supp(o_{q_1})$, let $o_{q_1} \htr{a}{\sigma} o_{q_2}$; otherwise, let $o_{q_1} \htr{\star}{\sigma_\star} o_{q_2}$, where $\sigma=\restr{\rho_{q_2}}{\weight{o_{q_2}}}$ and $\sigma_\star = \restr{\sigma_{q_2}\sub{*}{a}}{\weight{o_{q_2}}}$.
\end{proposition}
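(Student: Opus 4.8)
The plan is to treat the displayed construction as a definition and to verify that the resulting \hdma{} recognises $\Lang_A$. By \cref{pro:nset-to-nom}, the configuration graph of the constructed \hdma{}, equipped with the action $\pi\cdot(q,\rho)=(q,\pi\circ\rho)$, is itself an nDMA $A'$ whose language coincides with that of the \hdma{}; moreover its orbits correspond bijectively to the \hdma{} states $o_q$, and its acceptance condition is the image of $\{\{o_q\mid q\in A\}\mid A\in\acc\}$ under this correspondence. Hence it suffices to show $\Lang_{A'}=\Lang_A$, i.e.\ that two deterministic nDMAs accept the same language. I would do this by exhibiting a map $\Phi\colon\confs(A')\to Q$ that preserves the initial state, preserves transitions, and intertwines the two acceptance conditions. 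Concretely, for $\rho\colon\supp(o_q)\inj\names$ let $\bar\rho\in\Perm$ be any permutation extending $\rho$, and set $\Phi(o_q,\rho)=\bar\rho\cdot o_q$; this is well defined because $\bar\rho\cdot o_q$ depends only on the restriction of $\bar\rho$ to $\supp(o_q)$, and is equivariant since $\Phi(o_q,\pi\circ\rho)=\pi\cdot\Phi(o_q,\rho)$.

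Before that, I must check that the construction yields a genuine (deterministic) \hdma. The only subtle point is the unique $\star$-transition out of $o_{q_1}$: there are infinitely many names $a\notin\supp(o_{q_1})$, but for any two such $a,a'$ a permutation $\pi$ fixing $\supp(o_{q_1})$ and sending $a\mapsto a'$ satisfies $\pi\cdot o_{q_1}=o_{q_1}$, so by equivariance and determinism of the nDMA the two targets lie in the same orbit and share the canonical representative $o_{q_2}$; one checks likewise that $\sigma_\star$ does not depend on the chosen fresh $a$. The histories are well typed because the transition function of an nDMA is equivariant, whence $\supp(q_2)\subseteq\supp(o_{q_1})\cup\{a\}$; therefore $\restr{\rho_{q_2}}{\weight{o_{q_2}}}$ lands in $\weight{o_{q_1}}\uplus\{a\}$ in the register case, and (reading $\sigma_{q_2}$ as $\rho_{q_2}$) $\sigma_\star=\restr{\rho_{q_2}\sub{*}{a}}{\weight{o_{q_2}}}$ lands in $\weight{o_{q_1}}\uplus\{\star\}$ in the fresh case.

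The heart of the argument, and the step I expect to be the main obstacle, is transition preservation: if $(o_{q_1},\rho_1)\tr{b}(o_{q_2},\rho_2)$ in the configuration graph, then $\Phi(o_{q_1},\rho_1)\tr{b}\Phi(o_{q_2},\rho_2)$ in $A$. I would split along the two clauses of \cref{def:configuration-graph}. If $b=\rho_1(x)$ for a register $x$, the triggering transition is $o_{q_1}\htr{x}{\sigma}o_{q_2}$, coming from $o_{q_1}\tr{x}q_2$ with $q_2=\rho_{q_2}\cdot o_{q_2}$; equivariance of the nDMA gives $\Phi(o_{q_1},\rho_1)\tr{b}\bar\rho_1\cdot q_2=(\bar\rho_1\circ\rho_{q_2})\cdot o_{q_2}$, and a direct computation shows $\rho_2=\rho_1\circ\sigma$ agrees with $\bar\rho_1\circ\rho_{q_2}$ on $\supp(o_{q_2})$, so the target matches $\Phi(o_{q_2},\rho_2)$. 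If $b$ is fresh for the configuration, the triggering transition is the $\star$-transition from some $o_{q_1}\tr{a}q_2$ with $a$ fresh; here I would pick $\pi\in\Perm$ extending $\rho_1$ on $\supp(o_{q_1})$ with $\pi(a)=b$ --- possible because $b\notin\Im(\rho_1)=\supp(\Phi(o_{q_1},\rho_1))$ --- so that equivariance yields $\Phi(o_{q_1},\rho_1)\tr{b}\pi\cdot q_2=(\pi\circ\rho_{q_2})\cdot o_{q_2}$, and then verify that the updated assignment $\rho_2=(\rho_1\circ\sigma_\star)\sub{b}{\sigma_\star^{-1}(\star)}$ agrees with $\pi\circ\rho_{q_2}$ on $\supp(o_{q_2})$, the register $\sigma_\star^{-1}(\star)$ being precisely the one carrying the freshly consumed name. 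The care needed in tracking supports through the composition with $\sigma$/$\sigma_\star$ and, in the fresh case, through the substitution $\sub{b}{\cdot}$, is where the bookkeeping is delicate.

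Finally I would conclude. Since $\Phi$ sends the initial configuration $(o_{q_0},\restr{\rho_{q_0}}{\weight{o_{q_0}}})$ to $\rho_{q_0}\cdot o_{q_0}=q_0$ and preserves transitions, determinism of both machines forces the unique run of any $\alpha$ in $A'$ to map pointwise under $\Phi$ onto the unique run of $\alpha$ in $A$. As $\Phi(o_q,\rho)\in\orb(o_q)$, the $A'$-configuration with first component $o_q$ visited infinitely often records exactly the orbit $\orb(q)$ visited infinitely often by the $A$-run; thus $\Inf$ of one run is the image of $\Inf$ of the other under the bijection $A\mapsto\{o_q\mid q\in A\}$, and this bijection is exactly what relates the two acceptance conditions. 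Hence $\alpha$ is accepted by the \hdma{} iff it is accepted by $A$, giving $\Lang_{A'}=\Lang_A$ as required.
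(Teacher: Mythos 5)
Your proposal is correct and follows essentially the same route as the paper: both proofs establish a stepwise, equivariance-driven correspondence between the configuration graph of the constructed \hdma{} and the original nDMA --- the paper maps each state $q$ forward to the configuration $(o_q,\restr{\rho_q}{\supp(o_q)})$ and pushes nDMA transitions into configuration-graph edges, while your $\Phi(o_q,\rho)=\bar\rho\cdot o_q$ is the same correspondence traversed in the opposite direction, with the identical key computation that $\rho_1\circ\sigma$ (resp.\ the updated assignment in the fresh case) agrees with $\bar\rho_1\circ\rho_{q_2}$ on $\supp(o_{q_2})$, using $\supp(q_2)\subseteq\supp(q_1)\cup\{a\}$. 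You are in fact more thorough than the paper, which treats only the register case, calls allocation ``similar, even though technically more involved'', and leaves determinism and acceptance unchecked; the one caveat is that your claim that $\sigma_\star$ is independent of the chosen fresh $a$ holds only up to a symmetry of $o_{q_2}$ when the permutations $\rho_q$ are chosen incoherently (precisely the issue the paper defers in \cref{rem:no-symmetry}), which is harmless for your argument since the simulation goes through for any one fixed $\star$-transition per state.
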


\begin{example}
Consider the following \hdma{} 

\begin{center}
\begin{tikzpicture}[->,>=stealth',shorten >=1pt,node distance=14ex,auto,semithick,initial text={}]
  \tikzstyle{every state}=[minimum size=6ex]
  \tikzstyle{register}=	[circle,fill,draw,inner sep=0pt,minimum size=2pt]
	
  \node[state,initial] (q0) {$q_0$}; 
  \node[state,right of=q0] (q1) {};  
  \node (lab1) at (q1) {$q_1$};
  \node[register,label={[xshift=-3pt,yshift=-2pt]right:$x$}] (reg) [above=1pt of lab1] {};
  \node[right=8ex of q1] {$\acc = \{\{q_0,q_1\}\}$};

  \path (q0) edge [bend left]  node[inner sep=1pt] (star) {$\star$} (q1);
  \path (q1) edge[loop right] node[inner sep=1pt] {$\star$} (q1);
  \path (q1) edge [bend left] node {$x$} (q0);
  \path (reg) edge[densely dashed,bend right] (star);
\end{tikzpicture}
\end{center}
where the labelled dot within $q_1$ represents its register, and the dashed line depicts the history from $q_1$ to $q_0$ (we omit empty histories). This automaton accepts the language of \cref{exa:session}. In fact, $q_0$ is the only element in the orbit of the initial state of the nDMA, and $q_1$ canonically represents all $q_a$, $a \in \names$. This notation for \hdmas{} will be used throughout the paper.\end{example}

%

\section{Synchronized product}\label{sec:sync-product}
\newcommand{\eq}[1]{#1}
\newcommand{\syncQ}{Q_{\syncp}}
\newcommand{\syncW}[1]{\weight{#1}_{\syncp}}
\newcommand{\syncAss}{\rho_0^{\syncp}}
\newcommand{\syncInit}{q_0^{\syncp}}
\newcommand{\syncTr}[1]{\xymatrix@C-=4ex{\ar[r]^{#1}&\!_{\syncp}}}
\newcommand{\syncHtr}[2]{\xymatrix@C-=4ex{\ar[r]^{#1}_{#2}&\!_{\syncp}}}
\newcommand{\regrule}{\textsc{(Reg)}}
\newcommand{\allrule}{\textsc{(Alloc)}}
\newcommand{\cproj}{\pi}

The product of finite automata is a well-know operation: in the binary case, it produces an automaton whose states are pairs $(q_1,q_2)$ of states of the original automata and transitions are those both states do. In this section we define a similar operation on the underlying \emph{transition structures} of \hdmas{}, i.e.\ on tuples $\tstr = (Q,\weight{-},q_0,\rho_0,\trarrow)$ (we want to be parametric w.r.t.\ the accepting condition). One should  be careful in handling registers. When forming pairs of states, some of these registers could be constrained to have the same value.
Thus, states have the form $(q_1,q_2,R)$, where $R$ is a relation telling which registers of $q_1$ and $q_2$ contain the same value, representing the same register in the composite state. This is implemented by quotienting registers w.r.t.\ the equivalence relation $R^*$ induced by $R$; the construction is similar to the case of register automata, and to the construction of products in named sets given in \cite{CianciaM10}.

Given two transition structures $\tstr_i = (Q_i,\weight{-}_i,q_0^i,\rho_0^i,\trarrow_i)$, $i=1,2$, we define their synchronized product $\tstr_1 \syncp \tstr_2$. Given $q_1 \in Q_1$,$q_2 \in Q_2$, $Reg(q_1,q_2)$ is the set of relations that are allowed to appear in states of the form $(q_1,q_2,R)$, namely those $R \subseteq \weight{q_1}_1 \times \weight{q_2}_2$ such that, for each $(x,y) \in R$, there is no other $(x',y') \in R$ with $x'=x$ or $y'=y$. This avoids inconsistent states where the individual assignment for $q_1$ or $q_2$ would not be injective. In the following we assume $[x]_{R^*}$ to be $\{x\}$ when $x$ does not appear in any pair of $R$.




\begin{definition}
\label{def:syncp}
$\tstr_1 \syncp \tstr_2$ is the transition structure $(\syncQ,\syncW{-},\syncInit,\syncAss,\syncTr{\quad})$ defined as follows:
\begin{itemize}
	\item 
	$\syncQ := \{ (q_1,q_2,R) \mid q_1 \in Q_1,q_2 \in Q_2,R \in Reg(q_1,q_2)\}$;
	\item $\syncW{(q_1,q_2,R)} := (\weight{q_1}_1 \cup \weight{q_2}_2)_{/R^*}$, for $(q_1,q_2,R) \in \syncQ$;
	\item $\syncInit := (q_0^1,q_0^2,R_0)$, where $R_0:= \{ (x_1,x_2) \in \weight{q_0^1}_1 \times \weight{q_0^2}_2 \mid \rho_0^1(x_1) = \rho_0^2(x_2) \}$;
	\item $\rho_0([x]_{R_0^*}) = \rho_0^i (x)$ whenever $x \in \weight{q_0^i}_i$, $i \in \{1,2\}$; 
	\item transitions are generated by the following rules
\end{itemize}
		\begin{mathpar}
			\inferrule[(Reg)]
			{ q_1 \htrind{l_1}{\sigma_1}{1} q_1' \\
			q_2 \htrind{l_2}{\sigma_2}{2} q_2' 
			\\\\
			\exists i \in \{1,2\} : l_i \in \names \land [l_i]_{R^*} = \{l_1,l_2\} \cap \names}
			{ (q_1,q_2,R) \syncHtr{[l_i]_{R^*}}{\sigma_R}
			(q_1',q_2',S) } 
			\quad\;
			\inferrule[(Alloc)]
			{ q_1 \htrind{l_1}{\sigma_1}{1} q_1' \quad q_2 \htrind{l_2}{\sigma_2}{2} q_2' \quad l_1,l_2 = \star} 
			{ (q_1,q_2,R) \syncHtr{\star}{\sigma_A} (q_1',q_2',S) }
		\end{mathpar}
	where the relation $S$ and the mappings $\sigma_\tau$, for $\tau \in \{A,R\}$, are as follows
	\begin{align*}
		S &:= \sigma_2^{-1} \circ R \cup \{(l_1,l_2)\} \circ \sigma_1 
		\\[2ex]
		\sigma_\tau([x]_{S^*}) &:= 
		\begin{cases}
			[\sigma_i(x)]_{R^*} & x \in \weight{q'_i}_i \land \sigma_i(x) \neq \star \\
			[l_{3-i}]_{R^*} & x \in \weight{q'_i}_i \land \sigma_i(x) = \star \land \tau = R \\
			\star & x \in \weight{q'_i}_i \land \sigma_i(x) = \star \land \tau = A
		\end{cases}
	\end{align*}
\end{definition}
 Before explaining in detail the formal definition, we remark that the relation $S$ is well defined, i.e.\ it belongs to $Reg(q_1',q_2')$: the addition of $\{(l_1,l_2)\}$ to $R$ is harmless, as will be explained in the following, and $\sigma_1$ and $\sigma^{-1}_2$ can never map the same value to two different values (as they are functions) or viceversa (as they are injective).
The definition of $\syncInit$ motivates the presence of relations in states: $R_0$-related registers are the ones that are assigned the same value by $\rho_0^1$ and $\rho_0^2$; these form the same register of $\syncInit$, so $\syncAss$ is well-defined. The synchronization mechanism is implemented by rules \regrule{} and \allrule{}: they compute transitions of $(q_1,q_2,R) \in \syncQ$ from those of $q_1$ and $q_2$ as follows.

Rule \regrule{} handles two cases. First, if the transitions of $q_1$ and $q_2$ are both labelled by registers, say $l_1$ and $l_2$, and these registers correspond to the same one in $(q_1,q_2,R)$ (condition $[l_i]_{R^*} = \{l_1,l_2\}$), then \regrule{} infers a transition labelled with $[l_i]_{R^*}$ (the specific $i$ is not relevant). The target state of this transition is made of those of the transitions from $q_1$ and $q_2$, plus a relation $S$ obtained by translating $R$-related registers to $S$-related registers via $\sigma_1$ and $\sigma_2$. In this case, adding the pair $(l_1,l_2)$ to $R$ in the definition of $S$ has no effect, as it is already in $R$. The inferred history $\sigma_R$ just combines $\sigma_1$ and $\sigma_2$, consistently with $S^*$.

The other case for \regrule{} is when a fresh name is consumed from just one state, e.g.\ $q_2$.
This name must coincide with the value assigned to the register $l_1$ labelling the transition of $q_1$. Therefore the inferred label is $[l_1]_{R^*}$. The target relation $S$ changes slightly. Suppose there are $l'_1 \in \weight{q'_1}$ and $l'_2\in \weight{q'_2}$ such that $\sigma_1(l'_1) = l_1$ and $\sigma_2(l'_2) = \star$; after $q_1$ and $q_2$ perform their transitions, both these registers are assigned the same value, so we require $(l'_1,l'_2) \in S$. This pair is forced to be in $S$ by adding $(l_1,\star)$ to $R$ when computing $S$. This does not harm well-definedness of $S$, because $[l_1]_{R^*}$ is a singleton (rule premise $[l_1]_{R^*} = \{l_1,\star\} \cap 	\names = \{l_1\}$), so no additional, inconsistent identifications are added to $S^*$ due to transitivity. If either $l_1$ or $\star$ is not in the image of the corresponding history map, then augmenting $R$ has no effect, as the relational composition discards $(l_1,\star)$. The history $\sigma_R$ should map $[l_2']_{S^*}$ to $[l_1]_{R^*}$: this is treated by the second case of its definition; all the other values are mapped as before.

Transitions of $q_1$ and $q_2$ consuming a fresh name on both sides are turned by \allrule{} into a unique transition with freshness: $S$ is computed by adding $(\star,\star)$ to $R$, thus the registers to which the fresh name is assigned (if any) form one register in the overall state; the inferred history $\sigma_A$ gives the freshness status to this register, and acts as usual on other registers.

\begin{remark}
\label{rem:syncp-fin-det}
$\tstr_1 \syncp \tstr_2$ is finite-state and deterministic. In fact, every set in the definition of $\syncQ$ is finite.
As for determinism, given $(q_1,q_2,R) \in \syncQ$, each $l \in \syncW{(q_1,q_2,R)} \cup \{\star\}$ uniquely determines which labels $l_1$ and $l_2$ should appear in the rule premises (e.g. if $l = \{l_1\}$, with $l_1 \in \weight{q_1}_1$, then $l_2 = \star$), and by determinism each $q_i$ can do a unique transition labeled by $l_i$.
\end{remark}
We shall now relate the configuration graphs of $\tstr_1 \syncp \tstr_2$, $\tstr_1$ and $\tstr_2$.
\begin{definition}
Let $((q_1,q_2,R),\rho) \in \confs(\tstr_1 \syncp \tstr_2)$. Its $i$-th projection, denoted $\cproj_i$, is defined as
$\cproj_i((q_1,q_2,R),\rho) = (q_i,\rho_i)$ with $\rho_i := \lambda x \in \weight{q_i}_i.\rho([x]_{R^*})$
\end{definition}
Projections always produce valid configurations in $\confs(\tstr_1)$ and $\confs(\tstr_2)$: injectivity of $\rho_i$ follows from the definition of $Reg(q_1,q_2)$, ensuring that two different $x_1,x_2 \in \weight{q_i}_i$ cannot belong to the same equivalence class of $R^*$, i.e.\ cannot have the same image through $\rho_i$. The correspondence between edges is formalized as follows.

\begin{proposition}
\label{prop:edges-correspondence}
Given $C \in \confs(\tstr_1 \syncp \tstr_2)$:
\begin{enumerate}[(i)]
	\item if $C \tr{a} C'$ then $\cproj_i(C) \tr{a} \cproj_i(C')$, $i = 1,2$;
	\label{sync-to-each}
	\item if $\cproj_i(C) \trind{a}{i} C_i$, $i=1,2$, 
	then there is $C'$ s.t.\ $C\tr{a} C'$ and $\cproj_i(C) = C_i$.
	\label{each-to-sync}
\end{enumerate}
\end{proposition}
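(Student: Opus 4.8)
The plan is to prove both directions by unfolding the definitions of the configuration graph (\cref{def:configuration-graph}) and of the synchronized product (\cref{def:syncp}), and checking that the edge relation in the product projects to, and is determined by, the edge relations of the factors. Throughout I write $C = ((q_1,q_2,R),\rho)$ and $\cproj_i(C) = (q_i,\rho_i)$ with $\rho_i(x) = \rho([x]_{R^*})$, and similarly for primed versions. The essential bookkeeping is to relate the assignment update $\rho' = \rho \circ \sigma_\tau$ (possibly with a fresh-name insertion) used at the product level to the updates $\rho_i' = \rho_i \circ \sigma_i$ used in each factor, via the case definition of $\sigma_\tau$.

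For part~(\ref{sync-to-each}), I would start from an edge $C \tr{a} C'$. By \cref{def:configuration-graph} this edge comes from a product transition $(q_1,q_2,R) \syncHtr{l}{\sigma_\tau} (q_1',q_2',S)$, which in turn was generated by either \regrule{} or \allrule{} from factor transitions $q_i \htrind{l_i}{\sigma_i}{i} q_i'$. The goal is to show each $(q_i,\rho_i) \tr{a} (q_i',\rho_i')$ is a genuine factor edge, i.e.\ that $a$, $l_i$, $\sigma_i$, $\rho_i$ satisfy the corresponding clause of \cref{def:configuration-graph}. I would split on whether $l_i$ is a register or $\star$: when $l_i \in \weight{q_i}_i$ I must check $\rho_i(l_i) = a$, which follows because the product label is $[l_i]_{R^*}$, so $\rho([l_i]_{R^*}) = a$ and by definition of $\rho_i$ this is exactly $\rho_i(l_i)$; when $l_i = \star$ I must check $a \notin \Im(\rho_i)$, which follows from $a \notin \Im(\rho)$ (product-level freshness) together with $\Im(\rho_i) \subseteq \Im(\rho)$. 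The remaining work is to verify $\rho_i' = \rho_i \circ \sigma_i$ (plus the fresh-name update when $\sigma_i(x) = \star$) by computing $\rho_i'([x]) = \rho'([x]_{S^*})$ and pushing the definition of $\sigma_\tau$ through; each of the three cases of $\sigma_\tau$ corresponds precisely to one of the situations in \cref{def:configuration-graph}.

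For part~(\ref{each-to-sync}), I would go the other way: given factor edges $\cproj_i(C) \trind{a}{i} C_i = (q_i',\rho_i')$ that consume the \emph{same} symbol $a$, I recover the underlying factor transitions $q_i \htrind{l_i}{\sigma_i}{i} q_i'$ (unique by determinism of the factor configuration graphs, \cref{lem:deterministic-configuration-graph}). The key combinatorial point is to argue that the pair $(l_1,l_2)$ satisfies the side condition of exactly one of \regrule{} or \allrule{}: since both factor edges consume the same $a$, either $a$ is assigned in both factors, in which case $\rho_1(l_1)=\rho_2(l_2)=a$ forces $[l_1]_{R^*}=[l_2]_{R^*}$ and triggers \regrule{} (first case); or $a$ is fresh on both sides, giving $l_1=l_2=\star$ and triggering \allrule{}; or $a$ is assigned on one side and fresh on the other, which is the mixed case handled by the second part of \regrule{}. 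I would then let $C'$ be the target configuration produced by firing that rule, and appeal to \cref{lem:deterministic-configuration-graph} (determinism of the product configuration graph, which holds by \cref{rem:syncp-fin-det} and \cref{pro:nset-to-nom}) together with part~(\ref{sync-to-each}) to conclude $\cproj_i(C') = C_i$: part~(\ref{sync-to-each}) gives $\cproj_i(C) \tr{a} \cproj_i(C')$, and determinism of the factor graphs forces $\cproj_i(C') = C_i$.

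The main obstacle I anticipate is the correct treatment of the freshness/mixed case in \regrule{}, where a name is assigned in one factor but fresh in the other. Here one must check that the register identifications recorded by $S$ (specifically the forced pair coming from adding $(l_1,\star)$ to $R$) correctly reflect the fact that, after the transition, the newly allocated register on the fresh side holds the same value as the register $l_1$ on the assigned side. Verifying that $\rho'$ is simultaneously consistent with both $\rho_1 \circ \sigma_1$ and $\rho_2 \circ \sigma_2$ on the merged equivalence class $[l_1]_{S^*}$ is the delicate step, and it is exactly where the well-definedness remarks following \cref{def:syncp} (that $[l_1]_{R^*}$ is a singleton, so no spurious transitive identifications arise) are needed. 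The other routine-but-careful point is confirming that the converse direction respects the global freshness condition $a \notin \Im(\rho)$ in the pure \allrule{} case from the two local freshness conditions, which requires that $\Im(\rho) = \Im(\rho_1) \cup \Im(\rho_2)$, a direct consequence of the definition of $\rho_i$ and of $\syncW{-}$ as the quotient of the union of register sets.
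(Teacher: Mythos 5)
Your overall route coincides with the paper's: part~(\ref{sync-to-each}) by case analysis on \regrule{}/\allrule{}, verifying $\rho_i(l_i)=a$ (resp.\ freshness) and then computing $\rho'_i(x)=\rho'([x]_{S^*})$ through the case definition of $\sigma_\tau$; part~(\ref{each-to-sync}) by determinism. However, one step in part~(\ref{sync-to-each}) fails as you state it. You justify the freshness precondition ``$l_i=\star$ implies $a\notin\Im(\rho_i)$'' by appeal to \emph{product-level} freshness $a\notin\Im(\rho)$. That argument is sound only in the \allrule{} case. In the mixed \regrule{} case --- say $l_1\in\names$ and $l_2=\star$ --- the product edge is register-consuming, so $a=\rho([l_1]_{R^*})\in\Im(\rho)$: there is no product-level freshness to appeal to, and your inclusion $\Im(\rho_2)\subseteq\Im(\rho)$ gives nothing. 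The paper closes this case differently: if some $x\in\weight{q_2}_2$ had $\rho_2(x)=a$, then $\rho([x]_{R^*})=\rho([l_1]_{R^*})$, injectivity of $\rho$ forces $[x]_{R^*}=[l_1]_{R^*}$, and the rule premise $[l_1]_{R^*}=\{l_1,\star\}\cap\names=\{l_1\}$ makes that class a singleton containing no register of $q_2$ --- contradiction. Note that you do invoke exactly this singleton fact in your ``obstacle'' paragraph, but only for consistency of $\rho'$ on the merged class $[l_1]_{S^*}$ on the \emph{target} side; it is needed already on the \emph{source} side to establish that the factor edge with label $\star$ exists at all. Without that repair your mixed case does not go through.

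Two smaller points. In part~(\ref{each-to-sync}), your recovery of the factor transitions and the combinatorial argument that $(l_1,l_2)$ satisfies the side condition of exactly one rule is correct but redundant: your own closing appeal --- determinism/totality of the product configuration graph (\cref{lem:deterministic-configuration-graph} applied to $\tstr_1\syncp\tstr_2$, which is a legitimate \hdma{} transition structure by \cref{rem:syncp-fin-det}) yields some $C\tr{a}C'$; part~(\ref{sync-to-each}) projects it; determinism of each $\tstr_i$ forces $\cproj_i(C')=C_i$ --- already suffices, and is \emph{all} the paper does for this direction. Finally, citing \cref{pro:nset-to-nom} for determinism of the product configuration graph is off-target (that proposition concerns the passage from configuration graphs to nDMAs); \cref{rem:syncp-fin-det} together with \cref{lem:deterministic-configuration-graph} is the correct reference.
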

\begin{corollary}
\label{cor:paths-correspondence}
Let $C_0 = (\syncInit,\rho_0)$. We have a path $C_0 \tr{a_0} \dots \tr{a_{n-1}} C_n$ in the configuration graph of $\tstr_1 \syncp \tstr_2$ if and only if we have paths $\cproj_i(C_0) \tr{a_0} \dots \tr{a_{n-1}} \cproj_i(C_n)$ in the configuration graphs of $\tstr_i$, for $i=1,2$. The correspondence clearly holds also for infinite paths, i.e.\ runs.
\end{corollary}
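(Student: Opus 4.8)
The statement to prove is Corollary~\ref{cor:paths-correspondence}, which lifts the single-edge correspondence of Proposition~\ref{prop:edges-correspondence} to finite and infinite paths.

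\medskip

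\noindent\textbf{Proof plan.} The plan is to prove the finite-path statement by induction on the length $n$ of the path, and then to argue that the infinite case follows because paths (runs) are determined by their finite prefixes. The whole argument is essentially bookkeeping on top of Proposition~\ref{prop:edges-correspondence}, so I do not expect any genuine obstacle; the only point requiring a little care is the use of determinism (Lemma~\ref{lem:deterministic-configuration-graph} and Proposition~\ref{prop:unique-path}) to pin down that the $C'$ produced by part~(\ref{each-to-sync}) of the proposition is \emph{the} successor configuration, so that the two directions genuinely match up on the same sequence of configurations rather than merely on the existence of \emph{some} matching path.

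\medskip

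\noindent First I would treat the forward direction ($\Rightarrow$). Assume a path $C_0 \tr{a_0} \dots \tr{a_{n-1}} C_n$ in the configuration graph of $\tstr_1 \syncp \tstr_2$. For each edge $C_j \tr{a_j} C_{j+1}$, part~(\ref{sync-to-each}) of Proposition~\ref{prop:edges-correspondence} gives $\cproj_i(C_j) \tr{a_j} \cproj_i(C_{j+1})$ for $i=1,2$. Concatenating these edges over $j = 0,\dots,n-1$ yields the projected paths $\cproj_i(C_0) \tr{a_0} \dots \tr{a_{n-1}} \cproj_i(C_n)$ in the configuration graph of $\tstr_i$. No induction is strictly needed here, since the conclusion is just the edge-wise assembly.

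\medskip

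\noindent Next I would treat the backward direction ($\Leftarrow$), which is the direction that uses induction and determinism. Suppose we are given paths $\cproj_i(C_0) \tr{a_0} \dots \tr{a_{n-1}} D^i_n$ in the configuration graphs of $\tstr_i$, $i=1,2$, starting from the projections of $C_0 = (\syncInit,\rho_0)$; write $D^i_j$ for the $j$-th configuration of the $i$-th projected path. I build a path $C_0 \tr{a_0} \dots \tr{a_{n-1}} C_n$ in $\tstr_1 \syncp \tstr_2$ by induction on $j$, maintaining the invariant $\cproj_i(C_j) = D^i_j$. The base case $j=0$ holds by definition of $C_0$. For the inductive step, the invariant gives $\cproj_i(C_j) = D^i_j \tr{a_j} D^i_{j+1}$, so part~(\ref{each-to-sync}) of Proposition~\ref{prop:edges-correspondence} supplies a configuration $C_{j+1}$ with $C_j \tr{a_j} C_{j+1}$ and $\cproj_i(C_{j+1}) = D^i_{j+1}$, re-establishing the invariant. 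Here determinism of the configuration graph (Lemma~\ref{lem:deterministic-configuration-graph}) guarantees that this $C_{j+1}$ is the unique $a_j$-successor of $C_j$, so there is no ambiguity and the constructed path is well-defined and unique.

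\medskip

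\noindent Finally, for infinite paths, I would invoke Proposition~\ref{prop:unique-path}: a run from $C_0$ on a word $v \in \names^\omega$ is the unique infinite sequence whose every finite prefix is a path in the configuration graph, and likewise on each side. Thus the infinite correspondence follows by applying the finite result to every prefix and observing that the unique runs obtained on both sides agree prefix-by-prefix, hence coincide. Since the projections $\cproj_i$ act edge-wise and the runs are uniquely determined, the infinite bi-implication is immediate from the finite one, which is exactly the ``clearly holds also for infinite paths'' remark in the statement.
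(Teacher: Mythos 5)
Your proposal is correct and matches the paper's (implicit) argument: the paper states the corollary as an immediate consequence of \cref{prop:edges-correspondence}, and your edge-wise assembly for the forward direction, induction with the invariant $\cproj_i(C_j) = D^i_j$ via part~(ii) for the backward direction, and prefix-wise extension to runs using determinism (\cref{lem:deterministic-configuration-graph}, \cref{prop:unique-path}) is exactly that argument spelled out. Note only that part~(ii) of the proposition contains the typo $\cproj_i(C) = C_i$ where $\cproj_i(C') = C_i$ is meant, and you correctly used the intended reading.
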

This result allows us to relate the $\Inf$ of runs in the defined transition structures.
\begin{theorem}
\label{thm:inf-correspondence}
Given $\alpha \in \names^\omega$, let $r$ be a run for $\alpha$ in the configuration graph of $\tstr_1 \syncp \tstr_2$, and let $r_1$ and $r_2$ the corresponding 
runs for $\tstr_1$ and $\tstr_2$, according to \cref{cor:paths-correspondence}. Then $\pi_1(Inf(r)) = Inf(r_1)$ and $\pi_2(Inf(r)) = Inf(r_2)$.
\end{theorem}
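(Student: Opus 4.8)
The plan is to unwind the definitions and observe that, once the pointwise correspondence of \cref{cor:paths-correspondence} is in hand, the statement becomes a purely combinatorial fact about which elements recur infinitely often along a sequence over a \emph{finite} state set. First I would fix notation: write the run $r$ as the sequence of configurations $(s_k,\rho_k)_{k\in\omega}$, where each $s_k = (q_1^k,q_2^k,R^k) \in \syncQ$ is a product state. By \cref{cor:paths-correspondence} the projected run $r_i$ is exactly $(\cproj_i(s_k,\rho_k))_k$, so the underlying state sequence of $r_i$ is $(q_i^k)_k$, i.e.\ the $i$-th-component sequence of $(s_k)_k$. Recall that $\Inf(r) \subseteq \syncQ$ collects the product states occurring infinitely often in $(s_k)_k$, whereas $\Inf(r_i) \subseteq Q_i$ collects the $\tstr_i$-states occurring infinitely often in $(q_i^k)_k$; the symbol $\cproj_i$ in the statement denotes the induced \emph{state} projection $(q_1,q_2,R) \mapsto q_i$. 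With this reading, the goal $\cproj_i(\Inf(r)) = \Inf(r_i)$ amounts to comparing the infinitely-recurring elements of a sequence with those of its componentwise image. I treat $i=1$; the case $i=2$ is symmetric.

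For the inclusion $\cproj_1(\Inf(r)) \subseteq \Inf(r_1)$ I would take $(q_1,q_2,R) \in \Inf(r)$: this triple occurs at infinitely many indices $k$, and at each such $k$ we have $q_1^k = q_1$, so $q_1$ occurs infinitely often in $(q_1^k)_k$, giving $q_1 \in \Inf(r_1)$.

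The converse inclusion is where the only real content lies, and it rests on finiteness. Given $q \in \Inf(r_1)$, the index set $K = \{ k \mid q_1^k = q \}$ is infinite. By \cref{rem:syncp-fin-det} the product $\tstr_1 \syncp \tstr_2$ is finite-state, so the set of product states with first component $q$ is finite; since every $s_k$ with $k \in K$ lies in this finite set, the pigeonhole principle yields a single product state $(q,q_2,R)$ occurring at infinitely many $k \in K$, i.e.\ $(q,q_2,R) \in \Inf(r)$, whence $q \in \cproj_1(\Inf(r))$.

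The main obstacle is thus merely to make the abuse of notation precise — distinguishing $\cproj_i$ acting on states from $\cproj_i$ acting on configurations — and to invoke finiteness of $\syncQ$ correctly. No delicate reasoning about assignments or freshness is needed, because \cref{cor:paths-correspondence} has already reduced everything to the bare state sequences, where the assignment components play no further role.
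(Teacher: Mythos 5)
Your proof is correct and coincides with the paper's (implicit) argument: the paper states \cref{thm:inf-correspondence} without a separate proof, treating it as a direct consequence of \cref{cor:paths-correspondence}, and the tacit details are exactly the ones you supply --- the trivial forward inclusion, and the converse by pigeonhole over the finite state set $\syncQ$ (\cref{rem:syncp-fin-det}). Your explicit disambiguation of $\cproj_i$ as acting on states $(q_1,q_2,R)\mapsto q_i$ rather than on configurations resolves the paper's notational abuse in the intended way.
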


\section{Boolean operations and decidability}\label{sec:boolean-operations-decidability}
\newcommand{\compl}[1]{\overline{#1}}
 

Let $\Lang_1$ and $\Lang_2$ be $\omega$-regular nominal languages, and let $\autom_1 = (\tstr_1,\acc_1)$  and $\autom_2 = (\tstr_2,\acc_2)$ be automata for these languages, where $\tstr_1$ and $\tstr_2$ are the underlying transition structures.
The crucial tool is \cref{thm:inf-correspondence}: constructing an automaton for a boolean combination of $\Lang_1$ and $\Lang_2$ amounts to defining an appropriate accepting set for $\tstr_1 \syncp \tstr_2$.

\begin{definition} We define the sets $\acc_\cap$, $\acc_\cup$ and $\acc_{\compl{\Lang_1}}$ as follows:
%
\begin{align*}
	\acc_\cap &:= \bigcup_{S_1 \in \acc_1,S_2 \in \acc_2 } \{\{ (q_1,q_2,R) \in \syncQ \mid q_1 \in S_1 \land q_2 \in S_2 \}\} 
	\\
	\acc_\cup &:= \bigcup_{S_1 \in \acc_1,S_2 \in \acc_2 } \{\{(q_1,q_2,R) \in \syncQ \mid q_1 \in S_1 \lor q_2 \in S_2 \}\} 
	\\
	\acc_{\compl{\Lang_1}} &:= \Pow(Q_1) \setminus \acc_1 \qquad \text{where $Q_1$ are the states of $\autom_1$.}
\end{align*}
%
%
\end{definition}

\begin{theorem}
$\tstr_1 \syncp \tstr_2$, when equipped with accepting conditions $\acc_\cap$, $\acc_\cup$ and $\acc_{\compl{\Lang_1}}$, gives a \hdma{} respectively for $\Lang_1 \cap \Lang_2$, $\Lang_1 \cup \Lang_2$ and $\compl{\Lang_1}$.
\label{thm:bool-closure}
\end{theorem}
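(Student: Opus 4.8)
The plan is to reduce all three cases of the theorem to a single structural fact, namely \cref{thm:inf-correspondence}, which tells us that the infinitely-often-visited states of a run in $\tstr_1 \syncp \tstr_2$ project, via $\pi_1$ and $\pi_2$, exactly onto those of the corresponding runs in $\tstr_1$ and $\tstr_2$. First I would fix a word $\alpha \in \names^\omega$ and let $\run$ be its unique run in the configuration graph of $\tstr_1 \syncp \tstr_2$ (guaranteed by \cref{prop:unique-path}), with $\run_1, \run_2$ the corresponding runs in $\tstr_1, \tstr_2$ obtained from \cref{cor:paths-correspondence}. By \cref{thm:inf-correspondence} we have $\pi_1(\Inf(\run)) = \Inf(\run_1)$ and $\pi_2(\Inf(\run)) = \Inf(\run_2)$, where here I abuse notation to let $\pi_i$ act on the state-component of the $\Inf$-set. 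The whole proof then hinges on chaining the acceptance conditions of the product automaton with those of $\autom_1, \autom_2$ through these two equalities.

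For the intersection case, I would argue that $\alpha \in \Lang_1 \cap \Lang_2$ iff $\Inf(\run_1) \in \acc_1$ and $\Inf(\run_2) \in \acc_2$, which by the two projection equalities means $\pi_1(\Inf(\run)) \in \acc_1$ and $\pi_2(\Inf(\run)) \in \acc_2$. The key step is to show this holds precisely when $\Inf(\run) \in \acc_\cap$: given any $S_1 \in \acc_1$ with $\pi_1(\Inf(\run)) = S_1$ and $S_2 \in \acc_2$ with $\pi_2(\Inf(\run)) = S_2$, I claim $\Inf(\run)$ equals the set $\{(q_1,q_2,R) \in \syncQ \mid q_1 \in S_1 \land q_2 \in S_2\}$ that appears in the definition of $\acc_\cap$. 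One inclusion is immediate from the definition of the projections; for the other I would use that every state $(q_1,q_2,R)$ reachable along $\run$ with $q_1 \in S_1$ and $q_2 \in S_2$ must itself lie in $\Inf(\run)$ — this is the delicate point, since in principle a pair of infinitely-visited components could be realised by several distinct $R$-components, or a component pair with the right projections might never co-occur in a single product state visited infinitely often. The union case is entirely dual, replacing $\land$ by $\lor$ and intersecting the acceptance requirements by a disjunction.

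For complement I would observe that $\acc_{\compl{\Lang_1}}$ is defined on the state set $Q_1$ of $\autom_1$ alone, so the relevant run is $\run_1$, and $\alpha \in \compl{\Lang_1}$ iff $\Inf(\run_1) \notin \acc_1$ iff $\Inf(\run_1) \in \Pow(Q_1) \setminus \acc_1 = \acc_{\compl{\Lang_1}}$; here I must take the product $\tstr_1 \syncp \tstr_2$ with $\tstr_2$ a trivial automaton recognising $\names^\omega$ (as in \cref{fig:hd-names-omega}) so that $\pi_1(\Inf(\run)) = \Inf(\run_1)$ exactly, and the complement condition on $Q_1$ is read off the first projection. Throughout I would invoke \cref{rem:syncp-fin-det} to conclude that $\tstr_1 \syncp \tstr_2$ is genuinely a finite-state deterministic transition structure, so that equipping it with any of these accepting conditions yields a bona fide \hdma.

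I expect the main obstacle to be the precise matching of the $\Inf$-set of the product run with the explicit family of state-sets given in the definition of $\acc_\cap$ and $\acc_\cup$. The subtlety is that $\Inf(\run)$ is one specific set of product states, whereas $\acc_\cap$ ranges over all sets of the form $\{(q_1,q_2,R) \mid q_1 \in S_1, q_2 \in S_2\}$; I must verify that membership $\Inf(\run) \in \acc_\cap$ is equivalent to the conjunction of the two projected memberships, rather than merely implied by it. Establishing this cleanly requires showing that $\Inf(\run)$ is determined by its two projections together with the fact that it is realisable as a single infinitely-recurring set — essentially that no ``spurious'' product states with the correct projections are forced into or out of $\Inf(\run)$. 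I would handle this by a direct double-inclusion argument using \cref{thm:inf-correspondence} together with determinism, which guarantees that the infinitely-recurring product states are exactly those whose two components recur simultaneously along $\run$.
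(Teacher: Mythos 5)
Your overall route is exactly the paper's: its proof of \cref{thm:bool-closure} treats only $\Lang_1 \cap \Lang_2$, notes via \cref{rem:syncp-fin-det} that the product is a bona fide \hdma{}, applies \cref{thm:inf-correspondence} to get $\pi_i(\Inf(\run)) = \Inf(\run_i)$, and concludes ``from the definition of $\acc_\cap$'', declaring the other cases analogous. However, the ``delicate point'' you isolate is a genuine gap in your argument, and your proposed resolution fails. The claim that $\Inf(\run)$ equals the full rectangle $\{(q_1,q_2,R) \in Q_{\syncp} \mid q_1 \in S_1 \land q_2 \in S_2\}$ is false. Concretely: let $\tstr_1$ be register-free with two states and transitions $p_0 \htr{\star}{} p_1$, $p_1 \htr{\star}{} p_0$, and let $\tstr_2$ be the same with states $q_0,q_1$. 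In $\tstr_1 \syncp \tstr_2$ the unique run of \emph{any} word alternates between $(p_0,q_0,\varnothing)$ and $(p_1,q_1,\varnothing)$ and never visits $(p_0,q_1,\varnothing)$ or $(p_1,q_0,\varnothing)$; so with $S_1 = \{p_0,p_1\} \in \acc_1$ and $S_2 = \{q_0,q_1\} \in \acc_2$, the set $\Inf(\run)$ is a proper subset of the rectangle and hence not a member of $\acc_\cap$ as literally defined, even though the word lies in $\Lang_1 \cap \Lang_2$. Both factors here are deterministic, so determinism cannot rescue the step: it guarantees the run is unique and that a product state recurs iff its two components co-occur infinitely often, but \emph{which} pairs of $S_1 \times S_2$ co-occur is dictated by the synchronisation, and in general only some of them are realised. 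Your left-to-right implication (projected memberships imply $\Inf(\run) \in \acc_\cap$) therefore breaks, and the double-inclusion argument cannot close.

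The repair --- which is also what the paper's one-line conclusion tacitly requires, since the literal ``rectangle'' definition of $\acc_\cap$ in the paper has precisely the defect you stumbled on --- is to read the accepting family as projection-defined: $\acc_\cap = \{ S \subseteq Q_{\syncp} \mid \pi_1(S) \in \acc_1 \land \pi_2(S) \in \acc_2 \}$, and dually with $\lor$ for $\acc_\cup$. Under this reading no matching of $\Inf(\run)$ against explicitly listed sets is needed: \cref{thm:inf-correspondence} gives $\pi_i(\Inf(\run)) = \Inf(\run_i)$, hence $\Inf(\run) \in \acc_\cap$ iff $\Inf(\run_1) \in \acc_1$ and $\Inf(\run_2) \in \acc_2$, immediately, and the rest of your argument coincides with the paper's. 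On complement your instinct is sound and in fact more careful than the paper: $\acc_{\overline{\Lang_1}} \subseteq \Pow(Q_1)$ does not type-check as a condition on the product states, so one either equips $\tstr_1$ itself with $\Pow(Q_1) \setminus \acc_1$ --- correct because runs are unique and total (\cref{prop:unique-path}), so $\alpha \notin \Lang_1$ iff $\Inf(\run_1) \in \Pow(Q_1) \setminus \acc_1$ --- or, as you propose, takes the product with the trivial one-state structure of \cref{fig:hd-names-omega} and pulls the condition back along $\pi_1$; note that this pull-back is again the projection reading, not a rectangle.
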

%
\begin{theorem}
\label{thm:decidable}
Emptiness, and, as a corollary, equality of languages are decidable.
\end{theorem}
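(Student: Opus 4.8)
The plan is to reduce decidability of emptiness for an \hdma{} to a reachability-and-cycle-detection problem on its finite configuration graph, abstracted to a finite structure. First I would observe that the acceptance of an \hdma{} depends only on $\Inf(r)$, the set of states (not configurations) visited infinitely often, and that $\Inf(r) \in \acc$. So the language $\Lang_A$ is nonempty if and only if there exists a word $\alpha$ whose run $r$ from the initial configuration satisfies $\Inf(r) = S$ for some $S \in \acc$. The key simplification is that, although the configuration graph is infinite (configurations range over all injective assignments $\rho \colon \weight q \inj \names$), the \emph{reachability} and \emph{cyclic} behaviour is finitely describable: by equivariance, what matters is not the concrete names stored in registers, but only the pattern of which registers share values with which, i.e.\ the symbolic state $q \in Q$ together with the finitely many possible ``sharing'' relations. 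Indeed the \hdma{} itself, being a finite structure over the finite state set $Q$, already provides such a finite abstraction.

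The central lemma I would prove is that a set $S \subseteq Q$ is \emph{realizable} as $\Inf(r)$ for some word $\alpha$ if and only if there is a finite path in the \hdma's transition graph from $q_0$ reaching some state of $S$, together with a cycle through exactly the states of $S$ (and no others), such that the cycle can be traversed infinitely often in the configuration graph. The only subtlety is freshness: a $\star$-transition requires a name currently absent from the image of $\rho$, and one must check that an infinite supply of such fresh names remains available along any infinitely-repeated cycle. Since $\names$ is countably infinite and each configuration uses only finitely many registers, fresh names are always available; thus any cycle in the symbolic \hdma{} lifts to an infinite run in the configuration graph. Consequently emptiness reduces to the purely combinatorial question: does there exist $S \in \acc$ and a strongly-connected sub-structure of the finite transition graph whose state set is exactly $S$, reachable from $q_0$? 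This is decidable by inspecting the finitely many states, transitions, and the finite set $\acc$.

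From emptiness I would then derive decidability of language equality as a corollary. Given automata $\autom_1$ and $\autom_2$, equality $\Lang_1 = \Lang_2$ holds iff $(\Lang_1 \setminus \Lang_2) \cup (\Lang_2 \setminus \Lang_1) = \emptyset$. By \cref{thm:bool-closure}, the languages $\Lang_1 \cap \compl{\Lang_2}$ and $\compl{\Lang_1} \cap \Lang_2$ are again recognised by effectively constructible \hdmas{} over $\tstr_1 \syncp \tstr_2$ with the appropriate accepting conditions, and their union likewise. Emptiness of this effectively-built automaton is then decided by the procedure above, so equality is decidable.

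The main obstacle I anticipate is making the freshness argument fully rigorous for infinite repetitions of a cycle. One must verify that repeatedly firing a cycle containing $\star$-transitions never exhausts the name supply, which requires tracking how registers are overwritten: a register assigned a fresh name on one pass must be shown to be either reused or vacated before the next pass demands a new fresh name, so that at every step only finitely many names are ``in use''. Formally this follows because the number of registers $\weight q$ is bounded and $\names$ is infinite, but a careful invariant on the configuration along the cycle is needed to guarantee that a run realizing $\Inf(r) = S$ genuinely exists rather than merely a symbolic cycle; establishing this correspondence between symbolic cycles and concrete infinite runs is where the real work lies.
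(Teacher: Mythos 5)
Your proposal is correct and essentially matches the paper's proof: the paper packages your finite abstraction as a classical deterministic Muller automaton $M_A$ over the alphabet $\Sigma_A$ of transition labels $(l,\sigma)$ (plus a dummy sink state) and invokes classical decidability of Muller emptiness, which internally is exactly your reachable-cycle/SCC criterion, with soundness of the abstraction resting on the same observation you make — a $\star$-transition is always realizable because registers are finite and $\names$ is infinite — and it derives equality via \cref{thm:bool-closure} and the symmetric difference just as you do. Your anticipated ``careful invariant'' along repeated cycles is in fact unnecessary: since freshness is only local (relative to the current finite $\Im(\rho)$, per \cref{def:configuration-graph}), every symbolic step is realizable from every configuration, so arbitrary symbolic paths lift to concrete runs with matching $\Inf$ sets without any further bookkeeping.
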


\section{Ultimately-periodic words}\label{sec:up-words}
An \emph{ultimately periodic} word is a word of the form $uv^\omega$, with $u,v$ finite words. Given a language of infinite words $\Lang$, let $UP(\Lang)$ be its \emph{ultimately periodic fragment} $\{ \alpha \in \Lang \mid \alpha = uv^\omega \land u,v \; \text{are finite} \}$. It has been proven in \cite{CalbrixNP93,Buchi62} that, for every two $\omega$-regular languages $\Lang_1$ and $\Lang_2$, $UP(\Lang_1) = UP(\Lang_2)$ implies $\Lang_1 = \Lang_2$, i.e.\ $\omega$-regular languages are characterised by their ultimately periodic fragments.
In this section we aim to extend this result to the nominal setting. 

The preliminary result to establish, as in the classical case, is that every non-empty nominal $\omega$-regular language $\Lang$ contains at least one ultimately periodic word. For $\omega$-regular languages, this involves finding a loop through accepting states in the automaton and iterating it. In our case this is not enough, because it may not be possible to consume the same name in consecutive traversals of the same transition of a \hdma, due to freshness constraints. The first part of this section will be spent in showing that, given a loop in a \hdma{}, there always is a path induced by consecutive traversals of the loop, such that its initial and final configurations coincide. This implies that such path can be taken an arbitrary number of times.

We fix a loop (the specific \hdma{} is not relevant)
\[
	L \;:=\; p_0 \htr{l_0}{\sigma_0} p_1 \htr{l_1}{\sigma_1} \dots \htr{l_{n-1}}{\sigma_{n-1}} p_0
\]
We write $\ul{i}$ for $i \mod n$. 
For all $i=0,\dots,n-1$, let $\widehat{\sigma}_i \colon \weight{p_\ul{i+1}} \pto \weight{p_i}$ be the partial maps telling the history of old registers and ignoring the new ones, formally $\widehat{\sigma}_i := \sigma_i \setminus \{ (x,y) \in \sigma_i \mid y = \star \}$, and let $\widehat{\sigma} \colon \weight{p_0} \pto \weight{p_0}$ be their composition $\widehat{\sigma}_0 \circ \widehat{\sigma}_1 \dots \circ \widehat{\sigma}_{n-1}$. We define the set $I$ as the greatest subset of $\dom(\widehat{\sigma})$ such that $ \widehat{\sigma}(I) = I$,
i.e.\ $I$ are the registers of $p_0$ that ``survive'' along $L$. We denote by $T$ all the other registers, namely $T := \weight{p_0} \setminus I$. These are registers whose content is eventually discarded (not necessarily within a single loop traversal), as the following lemma states.
\begin{lemma}
\label{lem:rho-forget}
Given any $x \in T$, let $\{x_j\}_{j \in J_x}$ be the smallest sequence that satisfies the following conditions:
$
	x_0 = x
$
and
$
	x_{j+1} = \sigma_{\ul{j}}^{-1}(x_j),
$
where $j+1 \in J_x$ only if $\sigma_{\ul{j}}^{-1}(x_j)$ is defined. Then $J_x$ has finite cardinality.
\end{lemma}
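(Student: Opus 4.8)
The plan is to argue by contradiction: assume $J_x$ is infinite, so that the sequence $\{x_j\}_{j\in\omega}$ is defined for \emph{every} $j$, and then show that this forces $x \in I$, contradicting the hypothesis $x \in T = \weight{p_0} \setminus I$.

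First I would relate the ``forward'' iteration defining $\{x_j\}$ to the ``backward'' map $\widehat{\sigma}$ that defines $I$. The key observation is that each $x_j$ is a genuine register of $p_{\ul{j}}$ (never the label), so applying $\sigma_{\ul{j}}^{-1}$ to it coincides with applying $\widehat{\sigma}_{\ul{j}}^{-1}$ — discarding the $\star$-pairs is irrelevant, since those pairs do not have $x_j$ in their image. Composing $n$ consecutive steps and using $\widehat{\sigma} = \widehat{\sigma}_0 \circ \dots \circ \widehat{\sigma}_{n-1}$, one gets that the subsequence $y_k := x_{kn} \in \weight{p_0}$ satisfies $\widehat{\sigma}(y_{k+1}) = y_k$ for all $k$, with $y_0 = x$. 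In particular every $y_{k+1}$ lies in $\dom(\widehat{\sigma})$.

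The combinatorial heart is then finiteness plus maximality. Since $\weight{p_0}$ is finite and all $y_k$ are defined, the pigeonhole principle gives indices $k < k'$ with $y_k = y_{k'}$. Setting $C := \{y_k, \dots, y_{k'-1}\}$, the relations $\widehat{\sigma}(y_{j+1}) = y_j$ together with $y_{k'} = y_k$ show that $\widehat{\sigma}$ maps $C$ onto itself, i.e.\ $C \subseteq \dom(\widehat{\sigma})$ and $\widehat{\sigma}(C) = C$ (here I use that $\widehat{\sigma}$ is injective, being a composition of restrictions of the injective histories $\sigma_i$, so that the cyclic segment is mapped bijectively). By maximality of $I$ as the greatest subset of $\dom(\widehat{\sigma})$ fixed by $\widehat{\sigma}$, we obtain $C \subseteq I$, hence $y_k \in I$. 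Finally, iterating $\widehat{\sigma}(y_{j+1}) = y_j$ yields $y_0 = \widehat{\sigma}^{k}(y_k)$, and since $\widehat{\sigma}(I) = I$ keeps us inside $I$ at every application, we conclude $y_0 = x \in I$ — a contradiction. Therefore $J_x$ must be finite.

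The step requiring most care is the bookkeeping that aligns the forward orbit (tracing where the content of $x$ \emph{goes}, via $\sigma^{-1}$) with the backward map $\widehat{\sigma}$ (tracing where a register's content \emph{comes from}), and in particular checking that partiality and the $\star$-removal do not interfere, so that the $n$-step forward map is exactly $\widehat{\sigma}^{-1}$. Once this identification is secured, the remaining argument is a clean pigeonhole-plus-maximality argument and presents no real difficulty.
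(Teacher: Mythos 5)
Your proof is correct and takes essentially the same route as the paper's: both argue that if $J_x$ were infinite, the subsequence $x_{kn} \in \weight{p_0}$ would force a repetition by pigeonhole on the finite set $\weight{p_0}$, after identifying the $n$-step recurrence with the partial map $\widehat{\sigma}$ (your check that the removed $\star$-pairs cannot interfere is exactly the needed justification), contradicting $x \in T$ via maximality of $I$. The only difference is bookkeeping: the paper pushes the repetition back to $x_0$ by injectivity of the $\sigma_i$ and delegates maximality to the auxiliary \cref{lem:xI}, whereas you form the cycle $C$ at the repeated value, observe $\widehat{\sigma}(C)=C$ so $C \subseteq I$, and transport membership back to $x$ along $\widehat{\sigma}$ --- a slightly tidier instantiation of the same idea, since it exhibits a genuinely $\widehat{\sigma}$-invariant set rather than a $\widehat{\sigma}^k$-invariant one.
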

Now, consider any assignment $\rrho_0 \colon \weight{p_0} \to \names$. We give some lemmata about paths that start from $(p_0,\rrho_0)$ and are induced by consecutive traversals of $L$. The first one says that the assignment for $I$ given by $\rrho_0$ is always recovered after a fixed number of traversals of $L$, regardless of which symbols are consumed. In the following, given a sequence of transitions $P$, we write $(q_1,\rho_1) \TrP{v}{P} (q_2,\rho_2)$ whenever $(q_1,\rho_1) \Tr{v} (q_2,\rho_2)$ and such path is induced by $P$.
\begin{lemma} 
\label{lem:idI}
There is $\id \geq 1$ such that, for all $v_0,\dots,v_{\id-1}$ satisfying $(p_0,\rrho_0) \TrP{v_0}{L} (p_0,\rrho_1) \TrP{v_1}{L} \dots \TrP{v_{\id-1}}{L} (p_0,\rrho_\id)$ we have $\restr{ \rrho_\id }{I} = \restr{ \rrho_0 }{I}$.
\end{lemma}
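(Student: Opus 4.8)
The plan is to show that the assignment on the surviving registers $I$ evolves, over each traversal of $L$, by a fixed permutation that does not depend on the symbols read, and then to take $\id$ to be the order of that permutation. The key observation is the following. Fix a single traversal $(p_0,\rrho_0) \TrP{v}{L} (p_0,\rrho_1)$, with intermediate assignments $\rho^{(0)} = \rrho_0, \rho^{(1)}, \dots, \rho^{(n)} = \rrho_1$ at the states $p_0, p_1, \dots, p_0$. By the configuration-graph semantics, for a register $x \in \weight{p_\ul{i+1}}$ we have $\rho^{(i+1)}(x) = \rho^{(i)}(\sigma_i(x))$ whenever $\sigma_i(x) \neq \star$, while $\rho^{(i+1)}(x)$ equals the freshly consumed symbol when $\sigma_i(x) = \star$. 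Hence, for any register $x \in \dom(\widehat{\sigma})$, the backward history chain $x \mapsto \widehat{\sigma}_{n-1}(x) \mapsto \cdots$ never hits $\star$, so unwinding the chain gives $\rrho_1(x) = \rrho_0(\widehat{\sigma}(x))$. Crucially, this value is determined entirely by $\rrho_0$ and the structure of $L$, and is therefore \emph{independent of the symbols consumed along the traversal} (the consumed symbols only influence registers that pass through some $\star$, i.e.\ those outside $\dom(\widehat{\sigma})$).

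Since $I \subseteq \dom(\widehat{\sigma})$ and $\widehat{\sigma}(I) = I$, the previous observation specialises to $\restr{\rrho_1}{I} = \restr{\rrho_0}{I} \circ \restr{\widehat{\sigma}}{I}$, again for every admissible choice of $v$. The map $\restr{\widehat{\sigma}}{I} \colon I \to I$ is total (as $I \subseteq \dom(\widehat{\sigma})$), injective (being a restriction of the composition of the injective partial maps $\widehat{\sigma}_i$), and surjective onto $I$ (by $\widehat{\sigma}(I) = I$); since $I$ is finite it is therefore a permutation of $I$. Iterating the one-traversal identity over $k$ consecutive traversals yields $\restr{\rrho_k}{I} = \restr{\rrho_0}{I} \circ (\restr{\widehat{\sigma}}{I})^k$, and this equation holds for any symbols $v_0,\dots,v_{k-1}$ realising the traversals, precisely because each individual step is symbol-independent on $I$.

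It then suffices to let $\id \geq 1$ be the order of the permutation $\restr{\widehat{\sigma}}{I}$ in the symmetric group on $I$ (finite, since $I$ is finite). For this $\id$ we have $(\restr{\widehat{\sigma}}{I})^{\id} = id_I$, whence $\restr{\rrho_\id}{I} = \restr{\rrho_0}{I} \circ id_I = \restr{\rrho_0}{I}$ for all admissible $v_0,\dots,v_{\id-1}$, as required.

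I expect the only genuinely subtle point to be the symbol-independence of the update on $I$: one must argue carefully from the two cases of the configuration-graph definition that registers in $\dom(\widehat{\sigma})$ propagate their value strictly through the ``old register'' branch, so that the fresh-name choices recorded in the $v_j$ never reach them. Once this is established, the remainder reduces to the elementary fact that a permutation of a finite set has finite order.
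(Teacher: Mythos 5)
Your proof is correct and takes essentially the same route as the paper's: the paper likewise restricts $\widehat{\sigma}$ to $I$, notes that the resulting map $\pi$ is an element of the symmetric group on $I$, takes $\id$ to be its order, and concludes via $\restr{\rrho_\id}{I} = \restr{\rrho_0}{I} \circ \pi^{\id}$. The only difference is that you spell out the symbol-independent one-traversal update $\rrho_{k+1}(x) = \rrho_k(\widehat{\sigma}(x))$ for $x \in \dom(\widehat{\sigma})$, a step the paper uses implicitly.
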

The second one says that, after a minimum number of traversals of $L$, a configuration can be reached where the initial values of $T$, namely those assigned by $\rrho_0$, cannot be found in any of the registers.
\begin{lemma}
There is $\forg \geq 1$ s.t., for all $\gamma \geq \forg$ ,there are $v_0,\dots,v_{\gamma-1}$ satisfying $(p_0,\rrho_0) \TrP{v_0}{L} (p_0,\rrho_1) \TrP{v_1}{L} \dots \TrP{v_{\gamma-1}}{L} (p_0,\rrho_\gamma)$, with $\Im(\rrho_\gamma) \cap \rrho_0(T) = \varnothing$.
\label{lem:forgetT}
\end{lemma}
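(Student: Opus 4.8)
The plan is to build a single infinite path induced by repeated traversals of $L$ whose allocation transitions always consume names that are genuinely fresh — and in particular never belong to $\rrho_0(\regForg)$ — and then to extract the finite prefixes demanded by the statement. The key point is that, along any such path, each of the finitely many names in $\rrho_0(\regForg)$ can reside in a register only for a bounded number of steps, and once it has left every register it cannot return.

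First I would describe precisely how a stored name travels through registers. For $x \in \regForg$, the name $\rrho_0(x)$ starts in register $x_0 = x$ of $p_0$; performing the $\ul{j}$-th transition of the loop moves a name held in register $x_j$ of $p_{\ul{j}}$ into register $\sigma_{\ul{j}}^{-1}(x_j)$ of $p_{\ul{j+1}}$ whenever the latter is defined, and discards it otherwise. This is exactly the sequence $\{x_j\}_{j \in J_x}$ of \cref{lem:rho-forget}. Crucially, this propagation depends only on the histories $\sigma_i$, not on the consumed symbols: for an already-stored name the destination assignment is $\rho \circ \sigma$ on the relevant registers, and by injectivity of $\sigma_{\ul{j}}$ at most one register inherits it per step. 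Hence $\rrho_0(x)$ occupies a register precisely at the steps $j \in J_x$, a finite set. Setting $m := \max_{x \in \regForg}\max J_x$, after $m$ single transitions no name of $\rrho_0(\regForg)$ stemming from $\rrho_0$ remains in any register.

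Second I would rule out reintroduction. A name enters a register only by being consumed: along a register transition the consumed symbol is the present content of some register, so a name absent from all registers cannot reappear this way; along an allocation transition the consumed symbol must be fresh for the current configuration, and there I additionally choose it outside $\rrho_0(\regForg)$. Since $\names$ is infinite while only finitely many names are forbidden at each step (the current register contents together with $\rrho_0(\regForg)$), such a choice always exists and, by \cref{lem:deterministic-configuration-graph}, yields a legitimate edge faithful to the corresponding transition of $L$. With this discipline, once a name of $\rrho_0(\regForg)$ has left every register it never re-enters.

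Finally I would take $\forg$ with $\forg \cdot n > m$ (e.g.\ $\forg := \lfloor m/n\rfloor + 1$), so that $\gamma n > m$ for all $\gamma \geq \forg$. Letting $v_0,\dots,v_{\gamma-1}$ be the labels read during the first $\gamma$ traversals of the constructed path gives $(p_0,\rrho_0)\TrP{v_0}{L}\dots\TrP{v_{\gamma-1}}{L}(p_0,\rrho_\gamma)$, and since the configuration $(p_0,\rrho_\gamma)$ is reached after $\gamma n > m$ single transitions, the two previous steps show that no name of $\rrho_0(\regForg)$ sits in any register, i.e.\ $\Im(\rrho_\gamma)\cap\rrho_0(\regForg) = \varnothing$. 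I expect the second step to be the main obstacle: one must argue that avoiding $\rrho_0(\regForg)$ on top of the intrinsic freshness requirement of allocation transitions is simultaneously always realizable (thanks to the infinite alphabet) and enough to block every forgotten name from returning, all while keeping the path a genuine sequence of traversals of $L$.
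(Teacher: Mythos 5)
Your proposal is correct and takes essentially the same route as the paper's own proof: both hinge on \cref{lem:rho-forget} to bound how long a name of $\rrho_0(T)$ can survive in the registers, choose the symbols consumed at $\star$-transitions fresh and outside a forbidden set (always possible since $\names$ is infinite), and take roughly $\lceil \max_{x \in T}|J_x| / n \rceil$ traversals of $L$. The only deviations are cosmetic: you forbid just $\rrho_0(T)$ and argue non-reintroduction explicitly, where the paper forbids all of $\Im(\rrho_0)$ together with all previously consumed symbols and then verifies the claim separately on $I$ and $T$; and your intermediate phrase ``after $m$ single transitions no name remains'' should read ``after \emph{more than} $m$ transitions'' (a name with $\max J_x = m$ still occupies a register at step $m$), an off-by-one that your final threshold $\gamma n > m$ already absorbs.
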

We give the dual of the previous lemma: if we start from a configuration where registers are not assigned values in $\rrho_0(T)$, then these values can be assigned back to $T$ in a fixed number of traversals of $L$, regardless of the initial assignment.

\begin{lemma}
There is $\ass \geq 1$ such that,
for any $\trho_0 \colon \weight{p_0} \to \names$ with $\Im(\trho_0) \cap \rrho_0(T) = \varnothing$, there are $v_0,\dots,v_{\ass-1}$ satisfying $ (p_0,\trho_0) \TrP{v_0}{L} (p_0,\trho_1) \TrP{v_1}{L} \dots \TrP{v_{\ass-1}}{L} (p_0,\trho_\ass)$, with $\restr{\trho_\ass}{T} = \restr{\rrho_0}{T}$.
\label{lem:initT}
\end{lemma}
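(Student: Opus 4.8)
The plan is to exploit the freedom in choosing the fresh symbols consumed at $\star$-labelled transitions: a value ends up in a register $x \in T$ at the final $p_0$ precisely because it was allocated fresh somewhere along the recent traversals of $L$, and we will arrange those allocations so that each such $x$ receives exactly the value $\rrho_0(x)$.

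First I would make the provenance of $T$-registers explicit. Dually to \cref{lem:rho-forget}, for $x \in T$ the backward chain obtained by tracing the value of $x$ back through complete traversals (following $\widehat{\sigma}^{-1}$, i.e.\ old-register histories) is finite: since $\widehat{\sigma}$ is a partial injection on the finite set $\weight{p_0}$ and $x \notin I$, the point $x$ lies on a finite tail rather than on a cycle of $\widehat{\sigma}$. Hence, tracing $x$'s value backward step by step along the transitions of $L$, one reaches after finitely many steps a $\star$-transition at which $x$'s final value was freshly allocated. By injectivity of the histories $\sigma_i$, this backward provenance is injective: distinct $x \neq x'$ in $T$ trace back to distinct allocation sites, so there is never a conflicting demand on a single $\star$-transition. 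Taking the maximum tail length over the finitely many $x \in T$ yields a bound depending only on $L$, and I set $\ass$ to be the corresponding number of loop traversals; this is where uniformity in $\trho_0$ comes from.

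Next I would define the consumed symbols along the $\ass$ traversals. Transitions labelled by a register consume a forced value (the current content of that register), with no choice involved. Each $\star$-transition that is the provenance site of some target $x \in T$ is made to consume $\rrho_0(x)$; every other $\star$-transition consumes a symbol chosen fresh with respect to everything used so far. By \cref{lem:deterministic-configuration-graph} this uniquely determines the run $(p_0,\trho_0) \TrP{v_0}{L} \dots \TrP{v_{\ass-1}}{L} (p_0,\trho_\ass)$, and by construction the value planted at the provenance site of $x$ flows, along old-register histories, exactly into $x$ at the final $p_0$; hence $\trho_\ass(x) = \rrho_0(x)$ for every $x \in T$, i.e.\ $\restr{\trho_\ass}{T} = \restr{\rrho_0}{T}$.

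The main obstacle, and the point requiring care, is to verify that each planted value $\rrho_0(x)$ is genuinely fresh at the moment of its unique allocation, so that the $\star$-transition can indeed fire. The values present in any intermediate configuration are either carried over from the initial configuration — hence lying in $\Im(\trho_0)$ — or equal to symbols consumed at earlier $\star$-transitions. The hypothesis $\Im(\trho_0) \cap \rrho_0(T) = \varnothing$ rules out the first possibility for $\rrho_0(x)$; injectivity of $\rrho_0$ (assignments in configurations are injective) makes the target values pairwise distinct, and since each is planted only once and the fillers are chosen to avoid all relevant values, $\rrho_0(x)$ cannot coincide with a previously consumed symbol either. Consequently $\rrho_0(x)$ is absent from the current configuration exactly when we need to allocate it, which closes the argument. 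One may optionally first invoke \cref{lem:forgetT} to pass to a configuration in which the old $T$-values are already discarded, but given the disjointness hypothesis this is not essential.
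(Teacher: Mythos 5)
Your proposal matches the paper's proof essentially step for step: the paper likewise traces each $x \in T$ backwards through the histories to a unique $\star$-allocation event (the tuples $(x,i,j)$, finite by \cref{lem:xI}), sets $\ass$ to the maximum number of traversals $j$ over $x \in T$, plants $\rrho_0(x)$ at the corresponding event, fills all other $\star$-transitions with symbols avoiding $\Im(\trho_0) \cup \Im(\rrho_0)$ and all previously consumed symbols, and verifies freshness of each planted value via \cref{lem:tr-names}, the once-only planting, injectivity, and the hypothesis $\Im(\trho_0) \cap \rrho_0(T) = \varnothing$ --- exactly your argument. Two cosmetic corrections: the backward provenance is computed by applying $\widehat{\sigma}$ itself (the histories), not $\widehat{\sigma}^{-1}$, which is the forward flow used in \cref{lem:rho-forget}; and your filler symbols must be chosen to avoid all of $\Im(\rrho_0)$, including values to be planted \emph{later}, not merely ``everything used so far'' --- which is indeed what your phrase ``all relevant values'' must be taken to mean for the freshness check to go through.
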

Finally, we combine the above lemmata. We construct a path where: (1) the values assigned to $T$ are forgotten and then recovered (2) the values assigned to $I$ are swapped, but the initial assignment is periodically regained. Therefore, the length of such path should allow (1) and (2) to ``synchronize'', so that the final assignment is again $\rrho_0$.

\begin{theorem}
\label{thm:loop}
%
%
For each loop $L$ and all assignments $\rrho_0 \colon \weight{p_0} \to \names$, where $p_0$ is the initial state of $L$, there are $v_0,\dots,v_n$ such that
\[
	(p_0, \rrho_0) \TrP{v_0}{L} (p_0, \rrho_1) \TrP{v_1}{L} \cdots \TrP{v_n}{L} (p_0,\rrho_0) \enspace .
\]
\end{theorem}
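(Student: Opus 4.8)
The plan is to combine the three preceding lemmata (\cref{lem:idI}, \cref{lem:forgetT}, \cref{lem:initT}) by concatenating traversals of $L$ so that the separate behaviours on $I$ and on $T$ both return to their initial values simultaneously. The key observation is that the registers of $p_0$ split disjointly into $I$ and $T$, and the assignment $\rrho_0$ is recovered exactly when $\restr{\rho}{I} = \restr{\rrho_0}{I}$ \emph{and} $\restr{\rho}{T} = \restr{\rrho_0}{T}$. Thus I would build a path in three phases and then tune its total length to be a common multiple of the relevant periods. First I would invoke \cref{lem:forgetT} to obtain, for a suitable number of traversals, a configuration $(p_0,\rrho_\gamma)$ whose image avoids $\rrho_0(T)$; then \cref{lem:initT}, applied with $\trho_0 := \rrho_\gamma$ (which satisfies the required disjointness hypothesis $\Im(\trho_0) \cap \rrho_0(T) = \varnothing$), produces further traversals restoring $\restr{\rho}{T}=\restr{\rrho_0}{T}$.

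The main difficulty, and the reason this is not immediate, is that fixing the behaviour on $T$ constrains the number of loop traversals, while \cref{lem:idI} only guarantees the recovery of $\restr{\rho}{I}=\restr{\rrho_0}{I}$ after a multiple of $\id$ traversals. The two constraints must be made to agree. I would therefore arrange the total traversal count to be of the form $k \cdot \id$ for some $k$, so that the $I$-part is automatically reset, while simultaneously being at least $\forg + \ass$ (or a suitable multiple thereof) so that the forget-then-restore argument for $T$ goes through. Concretely, I expect to choose the number of traversals as a common multiple, e.g.\ $\id \cdot (\forg + \ass)$ or the least common multiple of $\id$ with the length $\forg + \ass$ of the $T$-manipulation phase, padding with extra full $\id$-cycles if needed. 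The crucial point that makes this work is that \cref{lem:idI} holds \emph{regardless of which symbols are consumed}: this freedom lets me choose the symbols along the $T$-phases (as dictated by \cref{lem:forgetT} and \cref{lem:initT}) without disturbing the guaranteed periodicity of $I$.

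The subtle step I expect to be the main obstacle is ensuring that the three phases compose \emph{on the same run}, i.e.\ that the intermediate configurations line up: the output assignment of the forgetting phase must be a legitimate input to the restoring phase, and the padding traversals used to hit the right multiple of $\id$ must not reintroduce values from $\rrho_0(T)$ prematurely nor disturb $I$. Here I would rely on the fact that each lemma produces paths \emph{induced by $L$} starting from an arbitrary assignment, so the phases chain together by taking the final configuration of one as the initial configuration of the next. Once the total length is fixed as a common multiple so that both $\restr{\rho}{I}$ and $\restr{\rho}{T}$ are simultaneously equal to their values in $\rrho_0$, disjointness of $I$ and $T$ yields $\rho = \rrho_0$ on all of $\weight{p_0}$, giving the desired closed path $(p_0,\rrho_0) \TrP{v_0}{L} \cdots \TrP{v_n}{L} (p_0,\rrho_0)$ and hence a configuration that can be revisited an arbitrary number of times.
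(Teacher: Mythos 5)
Your proposal is correct and is essentially the paper's own proof: a forgetting phase via \cref{lem:forgetT}, a restoring phase via \cref{lem:initT} applied with $\trho_0 = \rrho_\gamma$, and the total traversal count tuned to a multiple $\lambda\id$ of the period from \cref{lem:idI} (which holds regardless of the symbols consumed), so that $\restr{\rrho_{\gamma+\ass}}{I} = \restr{\rrho_0}{I}$ and $\restr{\rrho_{\gamma+\ass}}{T} = \restr{\rrho_0}{T}$ together with $I \cup T = \weight{p_0}$ yield $\rrho_{\gamma+\ass} = \rrho_0$. The only (cosmetic) difference is that the paper absorbs all the slack into the forgetting phase --- since \cref{lem:forgetT} holds for every $\gamma \geq \forg$, one simply picks $\gamma = \lambda\id - \ass$ for $\lambda$ large enough --- which disposes of your worry about padding cycles without any extra argument.
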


\begin{proof}
We can take any path of the form
\[
	(p_0,\rrho_0) \TrP{v_0}{L} (p_0,\rrho_1) \TrP{v_1}{L} \cdots \TrP{v_{\gamma-1}}{L} (p_0,\rrho_\gamma) \TrP{v_{\gamma}}{L} \cdots \TrP{v_{\gamma + \ass - 1}}{L} (p_0,\rrho_{\gamma + 
	 \ass})
\]
where the part from $(p_0,\rrho_0)$ to $(p_0,\rrho_\gamma)$ is given by \cref{lem:forgetT} and the remaining subpath is given by \cref{lem:initT}, with $\trho_0 = \rrho_\gamma$. The only constraint about $\gamma$ is that there should be a positive integer $\lambda$ such that $\gamma + \ass = \lambda \id$, where $\id$ is given by \cref{lem:idI}. The claim follows from $\restr{\rrho_{\gamma + \ass}}{T} = \restr{\rrho_0}{T}$ and 
$\restr{\rrho_{\gamma + \ass}}{I} = \restr{\rrho_0}{I}$ which, together with $I \cup T = \weight{p_0}$, imply $\rrho_{\gamma + \ass} = \rrho_0$.
\qed
\end{proof}

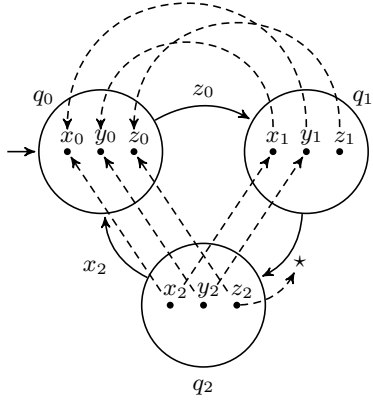
\begin{figure}[t]
\begin{center}
 \begin{tikzpicture}[->,>=stealth',shorten >=1pt,auto,node distance=2.8cm,semithick,initial text={}]
  \tikzstyle{every state}=[minimum size=10ex]
  \tikzstyle{register}=	[circle,fill,draw,inner sep=0pt,minimum size=2pt]
	
  \node[register,label={[shift={(2pt,-2pt)}]above:$x_0$}] (reg00) {};
  \node[register,label={[shift={(2pt,-2pt)}]above:$y_0$}] (reg01) [right=10pt of reg00] {};
  \node[register,label={[shift={(2pt,-2pt)}]above:$z_0$}] (reg02) [right=10pt of reg01] {};

  \node[register,label={[shift={(2pt,-2pt)}]above:$x_1$}] (reg10) [right=50pt of reg02] {};
  \node[register,label={[shift={(2pt,-2pt)}]above:$y_1$}] (reg11) [right=10pt of reg10] {};
  \node[register,label={[shift={(2pt,-2pt)}]above:$z_1$}] (reg12) [right=10pt of reg11] {};

  \node[register,label={[shift={(2pt,-2pt)}]above:$y_2$}] (reg21) at ($(reg02)!0.5!(reg10)$) [yshift=-15ex] {};

  \node[register,label={[shift={(2pt,-2pt)}]above:$x_2$}] (reg20) [left=10pt of reg21] {};
  \node[register,label={[shift={(2pt,-2pt)}]above:$z_2$}] (reg22) [right=10pt of reg21] {};

  \node[state,initial,fit={(reg00) (reg01) (reg02)},inner sep=2ex] (q0) {};
  \node[above left=-1ex of q0] (lab0) {$q_0$}; 
  \node[state,fit={(reg10) (reg11) (reg12)},inner sep=2ex] (q1) {};
  \node[above right=-1ex of q1] (lab1) {$q_1$}; 
  \node[state,fit={(reg20) (reg21) (reg22)},inner sep=2ex] (q2) {};
  \node[below=1pt of q2] (lab2) {$q_2$}; 

  \path (q0) edge [bend left] node {$z_0$} (q1);
  \path (q1) edge [bend left]  node[inner sep=1pt] (star) {$\star$} (q2);
  \path (q2) edge [bend left] node {$x_2$} (q0);
	
  \path (reg10) edge[densely dashed,out=90,in=90,looseness=2,shorten >=7pt,shorten <=8pt] (reg01);
  \path (reg11) edge[densely dashed,out=90,in=90,looseness=2,shorten >=7pt,shorten <=8pt] (reg00);
  \path (reg12) edge[densely dashed,out=90,in=90,looseness=2,shorten >=7pt,shorten <=8pt] (reg02);

  \path (reg20) edge[densely dashed,shorten <=8pt] (reg10);
  \path (reg21) edge[densely dashed,shorten <=8pt] (reg11);
  \path (reg22) edge[bend right,densely dashed] (star);

  \path (reg20) edge[densely dashed,shorten <=5pt] (reg00);
  \path (reg21) edge[densely dashed,shorten <=5pt] (reg01);
  \path (reg22) edge[densely dashed,shorten <=5pt] (reg02);

\end{tikzpicture}

\end{center}
\caption{\label{fig:upwords-ex} An example automaton. Some transitions are not shown: they are all assumed to end up in a sink state without registers.}
\end{figure}

\begin{example} We justify the above construction on the \hdma{} of \cref{fig:upwords-ex}, with initial assignment $\rho_0(x_0) = a$,$\rho_0(y_0) = b$ and $\rho_0(z_0) = c$. We omit the accepting condition, as it is not relevant. Consider the loop $L$ formed by all the depicted transitions.
We have $I = \{x_0,y_0\}$ and $T = \{z_0\}$. Consider the path
\begin{align*}
	(q_0,[\subs{a}{x_0},\subs{b}{y_0},\subs{c}{z_0}]) \tr{c} (q_1,[\subs{b}{x_1},\subs{a}{y_1},\subs{c}{z_1}]) &\tr{d} (q_2,[\subs{b}{x_2},\subs{a}{y_2},\subs{d}{z_2}]) \\
	&\tr{b} (q_0,[\subs{b}{x_0},\subs{a}{y_0},\subs{d}{z_0}])
\end{align*}
where $d \neq a,b,c$. Notice that the values of $x_0$ and $y_0$ get swapped according to the permutation $(a \; b)$, and $d$ is assigned to $z_0$. Our aim is to recover $\rho_0$ again. According to \cref{lem:idI}, $x_0$ and $y_0$ get their assignment back in $\theta = 2$ traversals of $L$ (in fact $(a\; b)^2 = (a\; b)$). As for $z_0$, its assignment is established in the second transition, but $c$ should not have been assigned to any register of $q_1$ in order for it to be consumed during this transition. This is where \cref{lem:forgetT} comes into play: it says that in at least $\epsilon = 1$ traversals of $L$ the name $c$ is discarded. This is exactly what happens in the path shown above. Then we can assign $c$ to $z_0$ in another $\zeta = 1$ traversal of $L$, according to \cref{lem:initT}. Since $\epsilon + \zeta  = \theta = 2$, traversing $L$ twice is enough. For instance, we can take the path spelling $cdbdca$.
\end{example}
\noindent Finally we introduce the main results of this section.
\begin{theorem}
\label{thm:up-fragment}
When $\Lang$ is a non-empty $\omega$-regular language, $UP(\Lang) \neq \emptyset$.
\end{theorem}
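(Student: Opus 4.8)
The plan is to show that any non-empty $\omega$-regular nominal language $\Lang$ contains an ultimately periodic word, by taking an accepting \hdma{} $A$ for $\Lang$ (which exists by \cref{prop:ndma-to-hdma}) and extracting a suitable lasso from its run structure. First I would use non-emptiness to fix some word $\alpha \in \Lang$ and consider its unique run $\run = (q_i,\rho_i)$ from the initial configuration (\cref{prop:unique-path}). Since $Q$ is finite, $\Inf(\run)$ is a non-empty set of states with $\Inf(\run) \in \acc$, and every state in $\Inf(\run)$ is visited infinitely often; moreover, from some index $N$ onwards the run only touches states in $\Inf(\run)$. I would then pick a state $p_0 \in \Inf(\run)$ together with one of its infinitely many recurring configurations, and isolate a finite infix of $\run$ that departs from $p_0$, traverses \emph{every} state of $\Inf(\run)$, and returns to $p_0$. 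This infix, projected onto the underlying transition structure, determines a loop $L$ in the sense of \cref{sec:up-words}, whose set of visited states is exactly $\Inf(\run)$.

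The core of the argument is then to turn this loop into a genuine period, i.e.\ a finite segment of the configuration graph that can be repeated verbatim. This is precisely what \cref{thm:loop} provides: starting from the configuration $(p_0,\rrho_0)$ that the run exhibits at the chosen occurrence of $p_0$, there exist finite words $v_0,\dots,v_n$ inducing consecutive traversals of $L$ that return to the \emph{same} configuration $(p_0,\rrho_0)$. Let $v = v_0 v_1 \cdots v_n$ be their concatenation; then $(p_0,\rrho_0) \Tr{v} (p_0,\rrho_0)$, so the infinite word $v^\omega$ is read by the configuration graph as an infinite sequence of identical loops anchored at $(p_0,\rrho_0)$. To reach this configuration from the initial one, I would use the prefix $u$ of $\alpha$ that the run spells up to the first occurrence of $(p_0,\rrho_0)$, giving $(\syncInit,\rho_0) \Tr{u} (p_0,\rrho_0)$ in the notation of \cref{prop:unique-path}. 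Concatenating yields the ultimately periodic candidate $uv^\omega$, whose run from the initial configuration coincides with the run of $u$ followed by infinitely many repetitions of the loop $L$.

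It then remains to verify acceptance, i.e.\ $\Inf(\run^{uv^\omega}) \in \acc$. Here I would argue that the states visited infinitely often by $\run^{uv^\omega}$ are exactly the states traversed by $L$, which by construction is $\Inf(\run)$. Every state of $L$ recurs because each loop copy is identical and $L$ passes through all of them; conversely no state outside $L$ can recur, since after the prefix $u$ the run stays within the loop. Hence $\Inf(\run^{uv^\omega}) = \Inf(\run) \in \acc$, so $uv^\omega \in \Lang \cap UP(\Lang)$ and $UP(\Lang) \neq \emptyset$.

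The main obstacle is the freshness bookkeeping in choosing the loop $L$ and the anchoring configuration so that \cref{thm:loop} applies and so that the recurrent-state set is preserved. One must ensure that the chosen infix genuinely forms a loop through all of $\Inf(\run)$ (rather than a sub-loop missing some recurrent states), because \cref{thm:loop} only guarantees that the \emph{states of the loop} recur, and acceptance depends on hitting \emph{exactly} the set $\Inf(\run) \in \acc$. The delicate point is that, a priori, $\Inf(\run)$ need not be the state set of a single simple cycle; I would handle this by building $L$ as a closed walk from $p_0$ that visits every state of $\Inf(\run)$ (possible since each is hit infinitely often in $\run$ and the underlying transition structure is finite and connected along $\run$ from index $N$ on), thereby guaranteeing that the loop's state set is precisely $\Inf(\run)$.
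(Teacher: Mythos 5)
Your proposal is correct and takes essentially the same route as the paper's proof: fix $\alpha \in \Lang$, run a prefix $u$ to an anchor configuration at some $p_0 \in \Inf(r^\alpha)$ past the point where the run stays inside $\Inf(r^\alpha)$, extract a closed walk through all (and only) those states, and invoke \cref{thm:loop} to replace the original continuation by loop traversals returning to the \emph{same} configuration, so that $u(v_0\cdots v_n)^\omega$ is accepted --- you even spell out the ``exactly $\Inf(r^\alpha)$'' point more carefully than the paper does. Two cosmetic slips, neither affecting soundness: a \emph{configuration} of the run need not literally recur (only the state does, which suffices since \cref{thm:loop} works for an arbitrary initial assignment $\rrho_0$), and the initial configuration of the \hdma{} should be written $(q_0,\rho_0)$ rather than with the synchronized-product notation $\syncInit$.
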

\begin{proof}
Let $\autom$ be the automaton for $\Lang$. Take any $\alpha \in \Lang$ and let $I = \Inf(r^\alpha)$ (recall $r^\alpha$ is the run for $\alpha$ in $\autom$), so $I \in \acc$. A path spelling $\alpha$ in the configuration graph of $A$ must begin with
$
	(q_0,\rho_0) \Tr{u} (q_1,\rho_1) \TrP{v}{P} (q_1,\rho_2),
$
where $q_1 \in I$ and $(q_1,\rho_1) \TrP{v}{P} (q_2,\rho_2)$ is such that $P$ goes through all the states in $I$. Since $P$ is a loop, we can replace its induced path with a new one given by \cref{thm:loop} 
$
	(q_0,\rho_0) \Tr{u} (q_1,\rho_1) \TrP{v_0}{P} \cdots \TrP{v_n}{P} (q_1,\rho_1).
$
The subpath from $(q_1,\rho_1)$ can be traversed any number of times, so we have $u(v_0\dots v_n)^\omega \in \Lang$.
\qed
\end{proof}

\begin{theorem}
\label{thm:up-determinacy}
For $\Lang_1,\Lang_2$ nominal $\omega$-regular, $UP(\Lang_1) = UP(\Lang_2) \implies\Lang_1 = \Lang_2$.
\end{theorem}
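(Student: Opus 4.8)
The plan is to mimic the classical B\"uchi argument: reduce the implication $UP(\Lang_1)=UP(\Lang_2) \implies \Lang_1=\Lang_2$ to a statement about emptiness of a boolean combination, then show that this combination, being a nominal $\omega$-regular language, is empty precisely because it has no ultimately periodic word (by \cref{thm:up-fragment}). Concretely, suppose $\Lang_1 \neq \Lang_2$. Without loss of generality there is a word in $\Lang_1 \setminus \Lang_2$, so the language $\Lang = \Lang_1 \cap \compl{\Lang_2}$ is non-empty. By \cref{thm:bool-closure}, $\Lang$ is again a nominal $\omega$-regular language (it is recognised by $\tstr_1 \syncp \tstr_2$ with the accepting condition obtained by combining $\acc_\cap$ with the complementation construction $\acc_{\compl{\Lang_2}}$). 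Since $\Lang$ is non-empty, \cref{thm:up-fragment} yields an ultimately periodic word $uv^\omega \in \Lang$, i.e.\ $uv^\omega \in \Lang_1$ and $uv^\omega \notin \Lang_2$.

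The second step is to observe that this contradicts the hypothesis $UP(\Lang_1) = UP(\Lang_2)$. Indeed, $uv^\omega$ is ultimately periodic and belongs to $\Lang_1$, hence $uv^\omega \in UP(\Lang_1) = UP(\Lang_2) \subseteq \Lang_2$, contradicting $uv^\omega \notin \Lang_2$. The symmetric case, a word in $\Lang_2 \setminus \Lang_1$, is handled identically by swapping the roles of the two languages. This establishes $\Lang_1 = \Lang_2$.

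The crux of the argument is entirely the closure under boolean operations and the existence of an ultimately periodic word in a non-empty language; both are already available as \cref{thm:bool-closure} and \cref{thm:up-fragment}. I therefore expect the main (and only genuine) obstacle to be ensuring that the complement and intersection constructions indeed apply to nominal $\omega$-regular languages, so that $\Lang_1 \cap \compl{\Lang_2}$ is witnessed by an actual \hdma{} whose emptiness can be analysed via loops. Once that closure is granted, the rest is the classical two-line contrapositive: a witnessing ultimately periodic word lives in the symmetric difference and hence in exactly one of the $UP(\Lang_i)$, contradicting their equality. No delicate reasoning about freshness or finite support is needed at this level, since all such reasoning has already been absorbed into \cref{thm:loop} and \cref{thm:up-fragment}.
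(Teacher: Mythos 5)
Your proof is correct and follows essentially the same route as the paper: the contrapositive via boolean closure (\cref{thm:bool-closure}) and the existence of an ultimately periodic word in any non-empty nominal $\omega$-regular language (\cref{thm:up-fragment}). The only cosmetic difference is that you take the one-sided difference $\Lang_1 \cap \compl{\Lang_2}$ with a WLOG case split, whereas the paper works with the symmetric difference $(\Lang_1 \cup \Lang_2) \setminus (\Lang_1 \cap \Lang_2)$ and thereby avoids the case distinction.
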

\begin{proof}
The proof mimics the one from \cite{CalbrixNP93}. Assume $\Lang_1 \neq \Lang_2$ and consider the language $(\Lang_1 \cup \Lang_2) \setminus (\Lang_1 \cap \Lang_2)$. By \cref{thm:bool-closure}, this is a nominal $\omega$-regular language (set difference can be expressed via intersection and complement)  and, by $\Lang_1 \neq \Lang_2$, it is not empty, so by \cref{thm:up-fragment} it contains at least one ultimately periodic word, which implies $UP(\Lang_1) \neq UP(\Lang_2)$. \qed
\end{proof}
Note that a similar result could not be achieved in the presence of so-called \emph{global freshness} \cite{Tze11}, e.g.\ the one-state automaton accepting only globally fresh symbols would have empty ultimately periodic fragment, just like the empty language. As a concluding remark, we note that, by \cref{thm:up-determinacy}, every $\omega$-regular language is characterized by a sublanguage of finitely supported words (the support of $uv^\omega$ just contains the finitely many symbols in $u$ and $v$). We find this result appealing, given the central role of the notion of support in the nominal setting. 


\section{Conclusions}\label{sec:conclusions}
This work is an attempt to provide a simple definition that merges the theories of nominal automata and $\omega$-regular languages, retaining effective closure under boolean operations, decidability of emptiness and language equivalence, and determinacy by ultimately periodic words. We sketch some possible future directions. It is well known that nominal sets correspond to presheaves over finite sets and injections that are \emph{sheaves} with respect to the atomic topology (the so-called \emph{Schanuel topos}), and that HD-automata correspond to coalgebras on such sheaves. By changing the index category of sheaves one obtains different kinds of nominal sets \cite{CianciaKM10}, and different classes of HD-automata. Since \hdmas{} are based on HD-automata, this correspondence seems relevant also for our work. For instance, by taking sheaves over graphs \cite{SammartinoPhD}, one could express complex relations among symbols in the alphabet, and require that, infinitely often, one encounters a symbol which is related in a certain way to a number of its predecessors. Furthermore, recall that automata correspond to logic formulae: \hdmas{} could be used to represent logic formulae with binders; it would also be interesting to investigate the relation with first-order logic on nominal sets \cite{Bojanczyk13}. There may be different logical interpretations of \hdmas, where causality or dependence \cite{Vnnen07,Galliani12} between events are made explicit. Finally, extending the two-sorted coalgebraic representation of Muller automata introduced in \cite{CV12} to \hdmas{} would yield canonical representative of automata up to language equivalence.

\paragraph{Related work.}	
Automata over infinite data words have been introduced to prove decidability of satisfiability for many kinds of logic: LTL with freeze quantifier \cite{DemriL09}; safety fragment of LTL \cite{Lazic11}; $FO$ with two variables, successor, and equality and order predicates \cite{BojanczykDMSS11}; EMSO with two variables, successor and equality \cite{KaraST12}; generic EMSO \cite{Bollig11}; EMSO with two variables and LTL with additional operators for data words \cite{KaraT10}. The main result for these papers is decidability of nonemptiness. These automata are ad-hoc, and often have complex acceptance conditions, while we aim to provide a simple and seamless nominal extension of a well-known class of automata. We can also cite variable finite automata (VFA) \cite{GrumbergKS10}, that recognize patterns specified through ordinary finite automata, with variables on transitions. Their version for infinite words (VBA) relies on B\"uchi automata. VBA are not closed under complementation and determinism is not a syntactic property.
For our automata, determinism is easily checked and we have closure under complementation. VBA can express ``global'' freshness, i.e.\ symbols that are different from all the others. Global freshness is out of our scope, because it would violate determinacy by ultimately periodic words.

\paragraph{Acknowledgements.} The authors thank Nikos Tzevelekos, Emilio Tuosto and Gianluca Mezzetti for several fruitful discussions related to nominal automata.

\bibliographystyle{splncs}
\bibliography{biblio_matteo,biblio_vincenzo,thesis_biblio}

\appendix
\section{Proofs}

\begin{proof}[of \cref{thm:languages-finitely-supported}]
By properties of nominal sets, for $x$ finitely supported and $f$ equivariant, $f(x)$ is finitely supported with $\supp(f(x))\subseteq \supp(x)$. Let $h : Q \to \Pow(\names^\omega)$ be the function mapping each $q$ to $\Lang_q$. We need to show that $h$ is equivariant, that is, $h(\pi \cdot q) = \{ \pi \cdot \alpha \mid \alpha \in \Lang_q\}$. Without loss of generality, we shall prove the right-to-left inclusion. Then, since $\pi$ and $q$ are arbitrary, one can prove the left-to-right inclusion starting from the state $\pi \cdot q$ and the permutation $\pi^{-1}$. Let $\alpha \in \Lang_q$. We shall prove that $\pi\cdot\alpha \in \Lang_{\pi\cdot q}$. Consider the unique (accepting) run $r$ of $\alpha$ from $q$, and  the unique run $r'$ of $\pi \cdot \alpha$ from $\pi \cdot q$. By equivariance of the transition function, and definition of run, for all $i$, we have $r'_i = \pi \cdot r_i$, thus $\orb(r'_i)=\orb(r_i)$, therefore $\Inf(r')= \Inf(r)$. 
\qed 
\end{proof}
\begin{proof}[of \cref{lem:deterministic-configuration-graph}]
 For each $a$, if $a \in \Im(\rho_1)$, recalling that $\rho_1$ is injective, there is $l \in \weight{q_1}$ with $\rho_1(l) = a$. By definition of \hdma, there is exactly one transition labelled with $l$, let it be $q_1 \htr{l}{\sigma} q_2$. Then by definition of configuration graph, we have $(q_1,\rho_1) \tr a (q_2,\rho_1 \circ \sigma)$. Since $\rho_1$ is injective, there can not be other transitions labelled with $a$ in the configuration graph. If $a \notin \Im(\rho_1)$, consider the only transition with label $\star$ from $q_1$, namely $q_1 \htr{\star}{\sigma} q_2$.  Then we have $(q_1,\rho_1) \tr a (q_2,(\rho_1 \circ \sigma)\sub{a}{\sigma^{-1}(\star)})$ in the configuration graph; this transition is unique by definition.
\end{proof}

\begin{proof}[of \cref{pro:nset-to-nom}]
 A run $\run$ in the configuration graph clearly is also a run in the obtained automaton. As $\orb(q,\rho) = \{(q,\rho')\mid \rho' : \weight{q} \inj \names \}$, also acceptance is the same on both sides. By \cref{lem:deterministic-configuration-graph} we get determinism. The proof is completed by noting that the obtained transition function is equivariant. For this, chose an edge $(q_1,\rho_1) \tr{a} (q_2,\rho_2)$ in the configuration graph, and look at \cref{def:configuration-graph}, thus consider a corresponding \hdma\ transition $q_1 \htr{l}{\sigma} q_2$. The case when $l \in \weight{q_1}$ is straightforward. When $l=\star$, thus $(q_1,\rho_1) \tr a (q_2,(\rho_1 \circ \sigma)\sub{a}{\sigma^{-1}(\star)})$ consider the permuted configuration $(q_1,\pi\circ\rho_1)$,  for any permutation $\pi$. Since $a \notin \Im(\rho_1)$, also  $\pi(a) \notin \Im (\pi\circ\rho_1)$, thus we have a transition $(q_1,\pi\circ\rho_1)\tr{\pi(a)}(q_2,(\pi \circ \rho_1\circ \sigma)\sub{\pi(a)}{\sigma^{-1}(*)})$, which is precisely the required permuted transition. \qed
\end{proof}

 
\begin{proof}[of \cref{prop:ndma-to-hdma}]
 The proof is similar to the equivalence results between categories of coalgebras given in \cite{CianciaM10}. First, we need to show that, for each transition $q_1 \tr a q_2$ in the original nDMA, there is an edge $(o_{q_1},\restr{\rho_{q_1}}{\weight{q_1}}) \tr a (o_{q_2},\restr{\rho_{q_2}}{\weight{q_2}})$ in the configuration graph of the derived \hdma. We look at the case $a \in \supp(q_1)$; the case with allocation is similar, even though technically more involved. By equivariance, from $q_1 \tr a q_2$, we have $o_{q_1} \tr{\rho^{-1}_{q_1} (a)} \rho^{-1}_{q_1}(q_2)$. Then we have an \hdma\ transition $o_{q_1} \htr{\rho^{-1}_{q_1}(a)}{\sigma} o_{q_2}$ where $\sigma = \restr{\rho_{\rho^{-1}_{q_1}(q_2)}}{\weight{o_2}}$. 
 By looking at the used permutations, we have $\sigma = \restr{\rho^{-1}_{q_1} \circ \rho_{q_2}}{\weight{o_2}}$. 
 Then, in the configuration graph, we have $(o_{q_1},\mathit{id}) \tr{\rho^{-1}_{q_1}(a)} (o_{q_2},\sigma)$, thus by equivariance, we have 
 $(o_{q_1},\restr{\rho_{q_1}}{\weight{q_1}}) \tr a (o_{q_2},\rho_{q_1} \circ 
 \restr {\rho^{-1}_{q_1} \circ \rho_{q_2}}{\weight{q_2}})$, 
 thus $(o_{q_1},\restr{\rho_{q_1}}{\weight{q_1}}) \tr a (o_{q_2},\restr{\rho_{q_2}}{q_2})$. Accordance of the accepting conditions is straightforward.
 \qed
\end{proof}

\begin{proof}[of \cref{prop:edges-correspondence}]
Let $C = ((q_1,q_2,R),\rho)$ and $\cproj_i(C) = (q_i,\rho_i)$, $i=1,2$.

\paragraph{Part \eqref{sync-to-each}.}
 
Let $C' = ((q_1',q_2',R'),\rho')$ and let 
\[
	(q_1,q_2,R) \syncHtr{l}{\sigma} (q_1',q_2',S)
\] 
be the transition inducing $C \tr{a} C'$. We proceed by cases on the rule used to infer this transition:
\begin{itemize}
	\item (\textsc{Reg}): then the transition is inferred from $q_i \htrind{l_i}{\sigma_i}{i} q_i'$, $i=1,2$, such that either $l_1$ or $l_2$ is in $\names$. Suppose, w.l.o.g., $l_1 \in \names$. Then $l = [l_1]_{R^*}$ and $\rho_i(l_1) = \rho([l_1]_{R^*}) = a$, so there is an edge $(q_1,\rho_1) \trind{a}{1} (q_1',\rho_1')$ in the configuration graph of $\tstr_1$. The following chain of equations shows that $\pi_1(C') = (q_1',\rho'_1)$:
	\begin{equation}
		\label{eq:rho}
		\begin{gathered}
			\begin{array}{rl}
				\rho'_1(x) &= \rho_1 (\sigma_1 (x) ) \\
				&= \rho([\sigma_1(x)]_{R^*}) \\
				&= \rho(\sigma_r([x]_{S^*})) \\
				&= \rho'([x]_{S^*}) 
			\end{array}
		\end{gathered}
		\tag{$\dagger$}
	\end{equation}
	To prove the existence of an edge $(q_2,\rho_2) \trind{a}{2} (q_2',\rho_2')$ in the configuration graph of $\tstr_2$, we have to consider the following two cases:
	\begin{itemize}
		\item If $l_2 \in \names$, then $\rho_2(l_2) = \rho([l_2]_{R^*}) = \rho([l_1]_{R^*}) = a$, by the rule premise $[l_2]_{R^*} = \{l_1,l_2\}$;
		\item If $l_2 = \star$, then $a$ should be fresh, so we have to check $a \notin \Im(\rho_2)$. Suppose, by contradiction, that there is $x \in \weight{q_2}_2$ such that $\rho_2(x) = a$, then $\rho([x]_{R^*}) = a = \rho([l_1]_{R^*})$, by definition of $\rho$, which implies $[x]_{R^*} = [l_1]_{R^*}$, by injectivity of $\rho$, i.e. $\{l_1,x\} \in [l_1]_{R^*}$, but the premise of the rule states $[l_1]_{R^*} = \{l_1,\star\} \cap \names = \{l_1\}$, so we have a contradiction. 
	\end{itemize}
	Now we have to check $\cproj_2(C') = (q_2',\rho_2')$. Since we have $\rho'_2(x) = (\rho_2 \circ \sigma_2)\sub{a}{\sigma_2^{-1}(\star)}(x)$, for $x \neq \sigma_2^{-1}(\star)$ the equations \eqref{eq:rho} hold. For $x =  \sigma_2^{-1}(\star)$ we have:
	\begin{align*}
		\rho'_2(x) &= (\rho_2 \circ \sigma_2)\sub{a}{x}(x) \\
		&= a \\
		&= \rho([l_1]_{R^*}) \\
		&= (\rho \circ \sigma_r)([x]_{S^*}) \\
		& = \rho'([x]_{S^*})
	\end{align*}	

	\item \allrule: then we have $l=\star$ and the transition is inferred from $q_i \htrind{\star}{\sigma_i}{i} q_i'$, $i=1,2$. Since $a \notin \Im(\rho)$, we also have $a \notin \Im(\rho_i)$, so there are $(q_i,\rho_i) \trind{a}{i} (q_i',\rho_i')$ with $\rho_i' = (\rho_i \circ \sigma_i)\sub{a}{\sigma^{-1}_i(\star)}$, for $i=1,2$. Finally, we have to check that each $\rho_i'(x)$ is as required: if $x \neq\sigma_i^{-1}(\star)$ equations \eqref{eq:rho} hold; for $x=\sigma_i^{-1}(\star)$ we have
	\begin{align*}
		\rho'_i(x) &= (\rho_i \circ \sigma_i)\sub{a}{x}(x) \\
		&= a \\
		&= (\rho \circ \sigma_a) \sub{a}{\sigma_a^{-1}(\star)}(\sigma_a^{-1}(\star)) \\
		&= (\rho \circ \sigma_a) \sub{a}{[x]_{S^*}}([x]_{S^*}) \\
		&= \rho'([x]_{S^*})
	\end{align*}
	
\end{itemize}

\paragraph{Part \eqref{each-to-sync}.} 
Since $\tstr_1 \syncp \tstr_2$ is deterministic, there certainly is $C \tr{a} C'$, for any $a \in \names$. This edge, by the previous part of the proof, has a corresponding edge $\cproj_i(C) \trind{a}{i} \cproj_i(C')$, for each $i=1,2$. But then $\cproj_i(C') = C_i$, by determinism of $\tstr_i$.

\qed
\end{proof}

\begin{proof}[of \cref{thm:bool-closure}]
We just consider $\Lang_1 \cap \Lang_2$, the other cases are analogous. Let $\autom_\cap$ be $(\tstr_1 \syncp \tstr_2,\acc_\cap)$; this is a proper \hdma{}, thanks to \cref{rem:syncp-fin-det}. Given $\alpha \in \names^\omega$, let $r_\cap$,$r_1$ and $r_2$ be the runs for $\alpha$ in the configuration graphs of $\autom_\cap,A_1$ and $A_2$, respectively. Then, by \cref{thm:inf-correspondence}, we have $\cproj_i(\Inf(r_\cap)) = \Inf(r_i)$, for each $i=1,2$. From this, and the definition of $\acc_\cap$, we have that $\Inf(r_\cap) \in \acc_\cap$ if and only if $\Inf(r_1) \in \acc_1$ and $\Inf(r_2) \in \acc_2$, i.e.\ $\alpha \in \Lang_{\autom_\cap}$ if and only if $\alpha \in \Lang_{\autom_1}$ and $\alpha \in\Lang_{\autom_2}$.
\qed
\end{proof}
\begin{proof}[of \cref{thm:decidable}]
Let $A = (Q,\weight{-},q_0,\rho_0,\htr{}{},\acc)$ be a \hdma{} for $\Lang$. Consider the set $\Sigma_A = \{ (l,\sigma) \mid \exists q,q' \in Q : q \htr{l}{\sigma} q' \}$. This is finite, so we can use it as the alphabet of an ordinary deterministic Muller automaton $M_A = (Q \cup \{\delta\}, q_0,\tr{}_s,\acc)$, where $\delta \notin Q$ is a dummy state, and the transition function is defined as follows: $q \tr{(l,\sigma)}_s q'$ if and only if $q \htr{l}{\sigma} q'$, and $q \tr{(l,\sigma)}_s \delta$ for all other pairs $(l,\sigma) \in \Sigma_A$. Clearly $\Lang_{M_A} = \varnothing$ if and only if $\Lang = \varnothing$, as words in $\Lang_{M_A}$ are sequence of transitions of $A$ that go through accepting states infinitely often, and thus produce a word in $\Lang$, and viceversa. The claim follows by decidability of emptiness for ordinary deterministic Muller automata. Finally, to check equality of languages, observe that the language $(\Lang_1 \cup \Lang_2) \setminus (\Lang_1 \cap \Lang_2 )$ is $\omega$-regular nominal, thanks to \cref{thm:bool-closure}. Then we just have to check its emptiness, which is decidable.
\qed
\end{proof}
We give one straightforward lemma about configuration graphs.
\begin{lemma}
\label{lem:tr-names}
For all edges $(p_1,\rho_1) \tr{a} (p_2,\rho_2)$ we have $\Im(\rho_2) \subseteq \Im(\rho_1) \cup \{ a \}$.
\end{lemma}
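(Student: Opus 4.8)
The plan is to unfold the definition of the configuration graph (\cref{def:configuration-graph}) and perform a case analysis. Any edge $(p_1,\rho_1) \tr{a} (p_2,\rho_2)$ arises from a unique underlying \hdma{} transition $p_1 \htr{l}{\sigma} p_2$, where the history $\sigma \colon \weight{p_2} \inj \weight{p_1} \uplus \{l\}$ either has a register label $l \in \weight{p_1}$ or the freshness label $l = \star$. I would split on these two cases exactly as in the definition of the edge relation.

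If $l \in \weight{p_1}$, then $\rho_2 = \rho_1 \circ \sigma$ and $\sigma$ takes all its values in $\weight{p_1}$, so every element of $\Im(\rho_2)$ has the form $\rho_1(\sigma(x))$ for some $x \in \weight{p_2}$ and therefore lies in $\Im(\rho_1)$. This gives $\Im(\rho_2) \subseteq \Im(\rho_1) \subseteq \Im(\rho_1) \cup \{a\}$ at once. If instead $l = \star$, then $\rho_2 = (\rho_1 \circ \sigma)\sub{a}{\sigma^{-1}(\star)}$; the only register whose value is not inherited from $\rho_1 \circ \sigma$ is $\sigma^{-1}(\star)$, which is assigned the consumed name $a$. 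For every other register $x$ we have $\sigma(x) \in \weight{p_1}$, hence $\rho_2(x) = \rho_1(\sigma(x)) \in \Im(\rho_1)$, and altogether $\Im(\rho_2) \subseteq \Im(\rho_1) \cup \{a\}$.

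There is essentially no obstacle here: the claim is an immediate consequence of how destination assignments are built, namely by composing the source assignment with the history and adjoining at most the single freshly received name. The only point requiring a word of care is that $\rho_1 \circ \sigma$ is defined precisely on those registers of $p_2$ not mapped to $\star$, so the functional update by $a$ accounts for exactly the one possibly-new value in $\Im(\rho_2)$.
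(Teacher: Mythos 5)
Your proof is correct and is exactly the intended argument: the paper states \cref{lem:tr-names} without proof, calling it straightforward, and the implicit justification is precisely your case analysis on the underlying \hdma{} transition in \cref{def:configuration-graph}. You also handle the one subtle point properly, namely that when $l = \star$ the functional update $\sub{a}{\sigma^{-1}(\star)}$ contributes at most the single name $a$ (and nothing at all when $\star \notin \Im(\sigma)$), so nothing further is needed.
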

We give one additional lemma about $I$ defined in \cref{sec:up-words}.
\begin{lemma}
\label{lem:xI}
Given $x \in \dom(\widehat{\sigma})$, suppose there is a positive integer $k$ such that $x = \widehat{\sigma}^k (x)$. Then $x \in I$.
\end{lemma}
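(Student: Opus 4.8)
The plan is to observe that the hypothesis $x = \widehat{\sigma}^k(x)$ makes $x$ a periodic point of the partial map $\widehat{\sigma}$, so that its forward orbit is a finite, $\widehat{\sigma}$-invariant subset of $\dom(\widehat{\sigma})$; the conclusion $x \in I$ then follows purely from the maximality clause in the definition of $I$.

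Concretely, I would set $O := \{\widehat{\sigma}^j(x) \mid 0 \le j \le k-1\}$, the finite forward orbit of $x$. First I would argue $O \subseteq \dom(\widehat{\sigma})$: since $\widehat{\sigma}^k(x) = x$ is asserted to be defined, and a composite of partial functions is defined only when the whole chain is, every iterate $\widehat{\sigma}^m(x)$ for $0 \le m \le k$ is defined; hence for each $j \le k-1$ the value $\widehat{\sigma}(\widehat{\sigma}^j(x)) = \widehat{\sigma}^{j+1}(x)$ exists, so $\widehat{\sigma}^j(x) \in \dom(\widehat{\sigma})$. Next I would check invariance $\widehat{\sigma}(O) = O$ by direct computation: $\widehat{\sigma}(O) = \{\widehat{\sigma}^{j+1}(x) \mid 0 \le j \le k-1\} = \{\widehat{\sigma}^1(x),\dots,\widehat{\sigma}^k(x)\}$, and using $\widehat{\sigma}^k(x) = x = \widehat{\sigma}^0(x)$ this set equals $\{\widehat{\sigma}^0(x),\dots,\widehat{\sigma}^{k-1}(x)\} = O$.

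Finally I would invoke the definition of $I$ as the greatest subset of $\dom(\widehat{\sigma})$ with $\widehat{\sigma}(I) = I$. The family of such invariant subsets is closed under union (as $\widehat{\sigma}$ is a function, $\widehat{\sigma}(S_1 \cup S_2) = \widehat{\sigma}(S_1) \cup \widehat{\sigma}(S_2)$), so this greatest element is in fact the union of the whole family and therefore contains every member of it. Since $O$ is one such invariant subset, we get $O \subseteq I$, and as $x \in O$ this yields $x \in I$.

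The argument is essentially bookkeeping, and I expect no serious obstacle. The only point demanding a little care is the partiality of $\widehat{\sigma}$: one must use periodicity precisely to guarantee that all iterates along the orbit are defined, so that $O$ genuinely lies inside $\dom(\widehat{\sigma})$ and $\widehat{\sigma}(O)$ is literally the shifted orbit. Notably, no appeal to injectivity of $\widehat{\sigma}$ is required.
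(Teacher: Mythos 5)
Your proof is correct, and it is a slightly different — and in fact tighter — argument than the paper's. The paper proceeds by contradiction: assuming $x \notin I$, it iterates $\widehat{\sigma}(I) = I$ to get $I = \widehat{\sigma}^k(I)$, hence $I \cup \{x\} = \widehat{\sigma}^k(I \cup \{x\})$, and declares this to contradict the maximality of $I$. Read literally, that last step is loose: what has been exhibited is a set invariant under $\widehat{\sigma}^k$, whereas $I$ is maximal among sets invariant under $\widehat{\sigma}$, so one still owes an argument converting $\widehat{\sigma}^k$-invariance into a strictly larger $\widehat{\sigma}$-invariant set (e.g.\ by adjoining the whole forward orbit of $x$ rather than the singleton, or by showing that the greatest $\widehat{\sigma}^k$-invariant set coincides with $I$). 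Your direct argument closes exactly this gap: you adjoin the full orbit $O = \{\widehat{\sigma}^j(x) \mid 0 \le j \le k-1\}$, use periodicity to check that every iterate is defined, so that $O \subseteq \dom(\widehat{\sigma})$ and $\widehat{\sigma}(O) = O$ hold on the nose, and then conclude $O \subseteq I$ from closure of invariant sets under union. As a bonus, that union-closure observation also establishes that the ``greatest'' invariant subset in the definition of $I$ exists at all (it is the union of the whole family), a point the paper takes for granted. Your remark that injectivity of $\widehat{\sigma}$ is not needed is accurate as well: it plays no role here, though it is essential nearby, e.g.\ in the proof of \cref{lem:rho-forget}.
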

\begin{proof}
Suppose $x \notin I$. $I = \widehat{\sigma}(I)$ implies $I = \widehat{\sigma}^k(I)$, so $I \cup \{x\} = \widehat{\sigma}^k(I \cup \{x\})$, but this is against the assumption that $I$ is the largest set satisfying $I = \widehat{\sigma}(I)$.
\qed
\end{proof}

\begin{proof}[of \cref{lem:rho-forget}]
Observe that this sequence is such that $x_{kn} \neq x_{k'n}$, for all $k,k' \geq 0$ such that $k \neq k'$. In fact, suppose there are $x_{kn} = x_{k'n}$, with $k < k'$. Then we would have $x_{kn-1} = x_{k'n-1}$, because $\sigma_{n}$ is injective. In general, $x_{kn-l} = x_{k'n-l}$, for $0 \leq l \leq kn$, therefore $x = x_0 = x_{(k'-k)n}$. This means that $\widehat{\sigma}^{(k'-k)}(x) = x$ which, by \cref{lem:xI}, implies $x \in I$, against the hypothesis $x \in T$.

Now, suppose that $J_x = \mathbb{N}$. Then we would have an infinite subsequence $\{x_{jn}\}_{j \in J_x}$ of pairwise distinct names that belong to $\weight{p_0}$, but $\weight{p_0}$ is finite, a contradiction.
\qed
\end{proof}

\begin{proof}[of \cref{lem:idI}]\hfill

\item Let $\pi \colon I \to I$ be the function $\restr{\widehat{\sigma}}{I}$ with its codomain restricted to $I$. Then $\pi$ is an element of the symmetric group on $I$, so it has an order $\id$, that is a positive integer such that $\pi^\id = id_I$. Hence $\restr{ \rrho_\id }{ I } =\restr{ \rrho_0 }{I} \circ \pi^\id = \restr{ \rrho_0 }{I}$.
\qed
\end{proof}

\begin{proof}[of \cref{lem:forgetT}]
Let $\mathcal{J}$ be
\[
	\mathcal{J} := \max \{ |J_x|\mid x \in T \} + 1 .
\]
This gives the number of transitions it takes to forget all the names assigned to $T$. Let $\forg$ be $\lceil \frac{\mathcal{J}}{n} \rceil$. For any $\gamma \geq \forg$, we can choose $v_0,\dots,v_{\gamma-1}$ as any $\gamma$-tuple of words that are recognized by the loop and such that, whenever $l_j = \star$, then $(v_i)_j$ is different from $\Im(\rrho_0)$ and all the previous symbols in $v_0,\dots,v_i$, for all $i=0,\dots,\gamma-1$ and $j=0,\dots,n-1$. Let us verify $\Im(\rrho_\gamma) \cap \rrho_0(T) = \varnothing$ separately on $I$ and $T$ (recall $I \cup T = \weight{p_0})$: we have $\rrho_\gamma(T) \cap \rrho_0(T) = \varnothing$, because all the names assigned to $T$ have been replaced by fresh ones; and we have $\rrho_\gamma(I) = \rrho_0(I)$, so $\rrho_\gamma(I) \cap \rrho_0(T) = \varnothing$.
\qed
\end{proof}

\begin{proof}[of \cref{lem:initT}]
For each name $x \in T$, define a tuple $(x,i,j)$ where $i$ is the index of the transition that consumes the fresh name that will be assigned to $x$, and $j$ is how many traversals of $L$ it takes for this assignment to happen (including the one where the transition $i$ is performed). Formally, $j$ is the smallest integer such that there are $x_{jn},\dots,x_i$ defined as follows
\[
	x_{jn} = x \qquad \sigma_\ul{k+1}(x_{k+1}) = x_k \qquad \sigma_i(x_i) = \star \enspace .
\]
Let $X$ be the set of such tuples and let $\ass := \max \{ j \mid (x,i,j) \in X \}$. Then we can construct $v_0,\dots,v_{\ass-1}$ as follows
\[
	(v_k)_i :=
	\begin{cases}
		\text{$y$ fresh} & l_i = \star \land i \notin \pi_2(X) \\
		\rrho_0(x) & (x,i,\ass - k) \in X
		 \\
		\trho_{k}(l_i) & l_i \neq \star
	\end{cases}
\]
where by $y$ fresh we mean different from elements of $\Im(\trho_0) \cup \Im(\rrho_0)$ and previous symbols in $v_0,\dots,v_{k}$.

The second case in the definition of $(v_k)_i$ is justified as follows. Suppose $\trho_{k,i}$ is the register assignment for $(p_i,\trho_{k,i}) \tr{(v_k)_i} \dots$, then we have to show $(v_k)_i = \rrho_0(x) \notin \Im(\trho_{k,i})$. Suppose, by contradiction, that $\rrho_0(x) \in \Im(\trho_{k,i})$, then by \cref{lem:tr-names} and by how we defined the symbols consumed we have $\rrho_0(x) \in \Im(\trho_0) \cup Y \cup \rrho_0(T')$, for some $T' \subseteq T$, and some set of fresh (in the mentioned sense) names $Y$.
But $\rrho_0(x) \notin Y$, by construction, and $x$ cannot already be in $T'$, because there cannot be two distinct tuples in $X$ that coincide on the first component. Therefore we must have
$\rrho_0(x) \in \Im(\trho_0)$, which implies $\rrho_0(T) \cap \Im(\trho_0) \neq \varnothing$, because $x \in T$, but this contradicts our hypothesis.

It is easy to check that this constructions reaches a configuration where all $x \in T$ have been assigned the desired value. 
\qed
\end{proof}

\end{document}